\definecolor{myyellow}{RGB}{242,226,149}
\definecolor{grey}{RGB}{150,150,150}
\definecolor{myblue}{RGB}{200,220,230}
\pgfplotsset{compat=newest}
\newtheorem{theorem}{Theorem}[section]
\newtheorem{lemma}[theorem]{Lemma}
\newtheorem{proposition}[theorem]{Proposition}
\newtheorem{definition}[theorem]{Definition}
\theoremstyle{remark}
\newtheorem{remark}[theorem]{Remark}
\newcommand{\comment}[1]{}
\newcommand*{\cA}{\mathcal{A}}
\newcommand*{\cB}{\mathcal{B}}
\newcommand*{\cG}{\mathcal{G}}
\newcommand*{\cH}{\mathcal{H}}
\newcommand*{\RR}{\mathbb{R}}
\newcommand*{\CC}{\mathbb{C}}
\newcommand*{\NN}{\mathbb{N}}
\newcommand*{\ZZ}{\mathbb{Z}}
\newcommand*{\eps}{\varepsilon}
\DeclareMathOperator{\tr}{tr}
\newcommand{\1}{\mathbf{1}}
\newcommand*{\id}{I}
\newcommand*{\ket}[1]{| #1 \rangle}
\newcommand*{\bra}[1]{\langle #1 |}
\newcommand{\ketbra}[2]{|#1\rangle\!\langle #2|}
\newcommand{\proj}[1]{|#1\rangle\!\langle #1|}
\newcommand*{\<}{\langle}
\renewcommand*{\>}{\rangle}
\newcommand{\edit}[1]{{#1}}
\newcommand{\psd}{\succeq}
\definecolor{mycolor1}{rgb}{0.00000,0.44700,0.74100}%
\definecolor{mycolor2}{rgb}{0.85000,0.32500,0.09800}%
\definecolor{mycolor3}{rgb}{0.92900,0.69400,0.12500}%
\definecolor{mycolor4}{rgb}{0.49400,0.18400,0.55600}%
\definecolor{lightgrey}{gray}{0.6}
\renewcommand{\H}{\mathbf{H}}
\newcommand{\im}{\text{im}}
\newcommand{\tomega}{\tilde{\omega}}
\newcommand{\nsd}{\preceq}
\newcommand{\cAloc}{\cA_{\text{loc}}}
\newcommand{\KMS}{\cG}
\newcommand{\KMSTI}{\cG^{\mathrm{TI}}}
\newcommand{\comm}{\text{comm}}
\newcommand{\bdex}{\partial^{ex}}
\newcommand{\OminNTI}{\<O\>^{\min}}
\newcommand{\OmaxNTI}{\<O\>^{\max}}
\newcommand{\Omin}{\<O\>^{\min,\mathrm{TI}}}
\newcommand{\Omax}{\<O\>^{\max,\mathrm{TI}}}
\newcommand{\OTI}{\<O\>^{\mathrm{TI}}}
\newcommand{\obsmin}[1]{\<#1\>^{\min}}
\newcommand{\obsmax}[1]{\<#1\>^{\max}}
\newcommand{\Pmin}{\mathsf{P}^{\min}}
\newcommand{\Pmax}{\mathsf{P}^{\max}}
\newcommand{\Pminp}{\mathsf{P}^{'\min}}
\newcommand{\pstar}{\mathsf{p}^{*}}
\newcommand{\pmin}{\mathsf{p}^{\min}}
\newcommand{\pmax}{\mathsf{p}^{\max}}
\newcommand{\pminTI}{\mathsf{p}^{\min,\mathrm{TI}}}
\newcommand{\pmaxTI}{\mathsf{p}^{\max,\mathrm{TI}}}
\newcommand{\pminp}{\mathsf{p}^{'\min}}
\newcommand{\pmaxp}{\mathsf{p}^{'\max}}
\newcommand{\sites}{\Gamma}
\newcommand{\suppref}[2]{{#1}}
\newcommand{\mainref}[2]{{#1}}
\title{Certified algorithms for equilibrium states \\
of local quantum Hamiltonians}
\author[1]{Hamza Fawzi}
\author[2]{Omar Fawzi}
\author[1]{Samuel O. Scalet\footnote{Email: sos25@cam.ac.uk}}
\affil[1]{Department of Applied Mathematics and Theoretical Physics, University of Cambridge, United Kingdom}
\affil[2]{{Univ Lyon, Inria, ENS Lyon, UCBL, LIP, Lyon, France}}
\date{}
\begin{document}

\maketitle

\begin{abstract}

Predicting observables in equilibrium states is a central yet notoriously hard question in quantum many-body systems.
In the physically relevant thermodynamic limit, certain mathematical formulations of this task have even been shown to result in undecidable problems.
Using a finite-size scaling of algorithms devised for finite systems often fails due to the lack of certified convergence bounds for this limit.
In this work, we design certified algorithms for computing expectation values of observables in the equilibrium states of local quantum Hamiltonians, both at zero and positive temperature. Importantly, our algorithms output rigorous lower and upper bounds on these values.
This allows us to show that expectation values of local observables can be approximated in finite time, contrasting related undecidability results.
When the Hamiltonian is commuting on a 2-dimensional lattice, we prove fast convergence of the hierarchy at high temperature and as a result for a desired precision $\eps$, local observables can be approximated by a convex optimization program of quasi-polynomial size in $1/\eps$.

\end{abstract}

\section{Introduction}

A central question in physics is to determine the properties of a many-body quantum system as a function of the interaction between its constituents.
The topic of Hamiltonian complexity~\cite{osborne2012hamiltonian,gharibian2015quantum} studies this question from a complexity-theoretic point of view. The results in Hamiltonian complexity suggest that efficient algorithms answering this question are unlikely to exist. In fact, determining the ground energy is hard for the complexity class QMA~\cite{kempe2005}, and this even holds for translation-invariant systems~\cite{gottesman2010quantum}. This means that we do not expect polynomial time (classical or quantum) algorithms for this problem. Computing expectation values of observables in the ground state is even harder than the ground energy~\cite{ambainis2014physical}. In the thermodynamic limit when the number of systems is taken to infinity, finding good approximations to the energy can be hard for fixed Hamiltonians~\cite{aharonov2022hamiltonian,watson2022computational} and the spectral gap is even uncomputable~\cite{cubitt2015undecidability}.

These complexity results put severe limitations on provably efficient and correct classical and quantum algorithms for the fundamental questions in many-body physics. However, these limitations only apply to algorithms that have the required properties \emph{for all} valid instances of the problem. Moreover, typically the instances showing hardness are highly contrived. In order to avoid these limitations, we consider here \emph{certified algorithms} where we require the correctness condition for all instances, but relax the condition of provable efficiency for all instances.
\edit{We say that an algorithm for computing $\pstar(H)$ is \emph{certified} if on input $H$, it outputs a pair of numbers $(\pmin_{\ell}, \pmax_{\ell})$ such that we have $\pstar(H) \in [\pmin_{\ell}, \pmax_{\ell}]$ for all $H$ and all $\ell$, i.e., the algorithm provides upper and lower bounds on the quantity of interest.
Even without having any performance guarantees \emph{a priori}, the error is bounded by the size of the interval \emph{a posteriori}.}

We further require that as $\ell \to \infty$, the interval $[\pmin_{\ell}, \pmax_{\ell}]$ converges to the desired value $\pstar(H)$.
Here, $\ell$ is a parameter that governs the runtime of the algorithm such that fast convergence in $\ell$ leads to an efficient algorithm. The above cited complexity results show e.g., that if $\pstar(H)$ is the ground energy of $H$, then for small $\ell$, the interval $[\pmin_{\ell}, \pmax_{\ell}]$ has to be large for some $H$. However, for any input $H$ of interest, we can always run the algorithm and the returned information will be correct and certifiably so. As such, no a priori analysis of the convergence speed is needed in order to obtain rigorous approximations of the quantity of interest.

A well-studied general way of determining properties of thermal states is by preparing such states on a quantum computer. This includes for example works on quantum Metropolis sampling ~\cite{terhal2000problem,temme2011quantum} and more generally quantum Gibbs samplers~\cite{rall2023thermal,chen2023quantum}. Here, we focus on \emph{classical} algorithms, which clearly cannot prepare quantum thermal states, but can still compute specified properties of general quantum thermal states. We note that our algorithms have a similar feature to (quantum) Metropolis sampling, and more generally (quantum) Markov Chain Monte Carlo methods, in that they can be applied for any local Hamiltonian but this comes at the (unavoidable) cost of not always having fast convergence.

\edit{In this work, we consider a hierarchy of convex optimization problems that provide certified algorithms for computing observables in ground and thermal states in finite and infinite systems. As a main result, we prove decidability of a formulation of the equilbrium observable problem, which contrasts with previous closely related undecidability results. The main technical ingredient is a formulation of constraints relaxing the Gibbs condition in a convex and algorithmically feasible manner. Furthermore, we are able to prove efficiency in a more restricted setting. Preliminary numerical results also suggest that the method can be of use in practice.}

\paragraph{Results}
We now present the setup in an informal way. We refer to \suppref{Section~\ref{sec:preliminaries}}{the Supplementary information} for more formal statements. We assume that our system is described by a local Hamiltonian on a discrete set of sites  $\sites$, which can be formally written as
\begin{equation}
H = \sum_{X \subset \sites} h_X
\end{equation}
where $h_X$ is the interaction term acting only on the finite set of sites $X \subset \sites$. We assume that the Hilbert space describing each site has finite dimension $d$ and the Hamiltonian $H$ 
is local, meaning that $h_X$ is nonzero only when $X$ has size smaller than a constant. 
For finite systems, the set of equilibrium states at temperature $T \geq 0$ is the set of states $\rho$ that minimize $\tr(H\rho)-TS(\rho)$, where $\tr(H\rho)$ is the energy, and $S(\rho)$ is the entropy.
At $T > 0$, this set reduces to a single equilibrium state also called the Gibbs state given by $\rho = \frac{e^{-H/T}}{\tr e^{-H/T}}$. However, for infinite systems the thermal equilibrium state is not necessarily unique, and this property lies at the heart of the existence of phase transitions.

Given the description of such a system, our objective is to determine physical properties of the corresponding equilibrium states. Consider a local observable $O$, e.g., the magnetization in the z-direction for a spin-$1/2$ system for some site $x \in \sites$. We are interested in the set of values $\tr(\rho O)$ that an equilibrium state at temperature $T = 1/\beta$ can take: 
\begin{equation}
[\OminNTI_{\beta},\OmaxNTI_{\beta}] := \left\{\tr(\rho O) : \rho \text{ is an equilibrium state at temperature $T=1/\beta$} \right\}.
\end{equation}
Note that for a finite system and $\beta < \infty$, we have $\obsmin{O}_{\beta} = \obsmax{O}_{\beta}$ as the thermal state is unique. However, we can have $\obsmin{O}_{\beta} < \obsmax{O}_{\beta}$ for $\beta = \infty$ if the ground space of $H$ is degenerate and for infinite systems when there are multiple thermal equilibrium states.

\paragraph{Main result} We formulate a hierarchy of convex optimization programs which produce converging lower (resp. upper) bounds on $\obsmin{O}_{\beta}$ (resp. $\obsmax{O}_{\beta}$) for any $\beta \geq 0$. Our optimization problems are formulated in terms of the matrix-valued relative entropy function
\begin{equation}
\label{eq:mre-intro}
D_{op}(A\|B) = A^{1/2} \log(A^{1/2} B^{-1} A^{1/2}) A^{1/2}
\end{equation}
defined for arbitrary positive definite matrices $A,B$, and which is jointly convex in $(A,B)$. 
Consider a finite subset $\Lambda$ of the lattice sites $\Gamma$ containing the support of the local observable $O$. Then we can formulate the following convex optimization program over density operators supported on $\overline{\Lambda}=\Lambda \cup \bdex \Lambda$ where $\bdex \Lambda$ is the external boundary of $\Lambda$ (see \suppref{\eqref{eq:bdexLambda}}{the Supplementary information Equation (4)} for the precise definition):
\label{eq:optintro}
\begin{align}
\underset{\rho}{\text{min/max}} \quad & \tr(\rho O)\\
\text{s.t.} \quad & 
\rho \text{ density operator on } (\CC^d)^{\otimes |\overline{\Lambda}|}\\
& \tr_{\overline{\Lambda} \setminus \Lambda}(H_{\overline{\Lambda}} \rho - \rho H_{\overline{\Lambda}}) = 0\label{eq:optintro-comm}\\
& D_{op} \left( \, A_{\rho} \| B_{\rho} \, \right) \; \nsd \; \beta C_{\rho}.\label{eq:optintro-EEB}
\end{align}
Here,  $H_{\overline{\Lambda}}$ is the truncated Hamiltonian acting on $\overline{\Lambda}$, and $A_{\rho},B_{\rho}, C_{\rho}$ are Hermitian matrices that depend linearly on $\rho$, defined by:
\begin{equation}
(A_{\rho})_{ij} = \tr(\rho a_i^* a_j), \quad (B_{\rho})_{ij} = \tr(\rho a_j a_i^*), \quad (C_{\rho})_{ij} = \tr(\rho a_i^* [H_{\overline{\Lambda}}, a_j])
\end{equation}
where $\{a_i\}$ is a basis of the space of operators acting on $(\CC^d)^{\otimes |\overline \Lambda|}$.
\if0
Given $\Lambda \subset \sites$ we let 
\begin{equation}
    \label{eq:clLambda}
    \overline{\Lambda} := \Lambda \cup \bdex \Lambda
\end{equation}
where $\bdex \Lambda$ is the exterior boundary of $\Lambda$ defined by $\bdex \Lambda = \{y \in \Lambda^c : \exists Y \subset \sites, h_Y \neq 0, y \in Y, \Lambda \cap Y \neq \emptyset\}$. Let $(\Lambda_{\ell})_{\ell \in \NN}$ be an increasing sequence of subsets of $\sites$ such that the support of $O$ is contained in $\Lambda_0$ and $\overline{\Lambda_{\ell}} \subset \Lambda_{\ell+1}$ for all $\ell$ and $\Lambda_{\ell} \uparrow \sites$ as $\ell\to\infty$. For example, if $\sites = \ZZ^D$ and $H$ is a nearest-neighbor Hamiltonian and $O$ is supported on site $0$, one can take $\Lambda_{\ell} = \{-\ell,\ldots,\ell\}^D$. Let also $\Delta_{\ell} = \Lambda_{\ell} \setminus \Lambda_{\ell-1}$. Consider now the convex optimization program over density matrices supported on $\Lambda_{\ell}$:

\begin{subequations}
\label{eq:optintro}
\begin{align}
\underset{\rho}{\text{min/max}} \quad & \tr(\rho O)\\
\text{s.t.} \quad & 
\rho \text{ density operator on } (\CC^d)^{\otimes |\Lambda_\ell|}\\
& \tr_{\Delta_{\ell}}(\rho H_{\Lambda_{\ell}} - H_{\Lambda_{\ell}} \rho) = 0\label{eq:optintro-comm}\\
& D_{op} \left( \, \1_{\Lambda_{\ell-1}} \otimes \tr_{\Delta_{\ell}}(\overline{\rho}) \, , \, \tr_{\Delta_{\ell}}(\rho) \otimes \1_{\Lambda_{\ell-1}} \, \right) \; \nsd \; \beta C_{\rho}\label{eq:optintro-EEB}\\
& \text{ where } C_{\rho} = K^* (\rho \otimes H_{\Lambda_{\ell}}) K  - \frac{1}{2} \tr_{\Delta_{\ell}}(\rho H_{\Lambda_{\ell}} + H_{\Lambda_{\ell}} \rho) \otimes \1_{\Lambda_{\ell-1}}
\end{align}
\end{subequations}
where in the last line, $K$ is the matrix representation of the linear map $K \text{vec}(\sigma) = \sigma \otimes \1_{\Delta_{\ell-1}}$, and $K^*$ is its adjoint.
\fi
Note that the program~\eqref{eq:optintro} involves a matrix variable of dimension $d^{|\overline{\Lambda}|} \times d^{|\overline{\Lambda}|}$ and convex constraints involving matrices of dimension at most $d^{2|\overline{\Lambda}|} \times d^{2|\overline{\Lambda}|}$.

By taking larger and larger subsets $\Lambda \uparrow \sites$, one can prove that the solutions of the convex optimization programs will converge to the expectation values $\OminNTI_{\beta}$ and $\OmaxNTI_{\beta}$. This is the content of the following theorem.

\begin{theorem}[Certified algorithms for expectation values of equilibrium states]
\label{thm:main}
Let $\Lambda_0 \subset \Lambda_1 \subset \dots \subset \sites$ be an increasing sequence such that the support of the local observable $O$ is contained in $\Lambda_0$.
For any $\ell \in \NN$, let $\pmin_{\ell}$ and $\pmax_{\ell}$ be respectively the minimum and maximum values of the convex optimization problems \eqref{eq:optintro} with $\Lambda = \Lambda_{\ell}$. Then we have
\begin{equation}
\pmin_{\ell} \leq \OminNTI_{\beta} \leq \OmaxNTI_{\beta} \leq \pmax_{\ell}.
\end{equation}
Furthermore, $\pmin_{\ell} \uparrow \OminNTI_{\beta}$ and $\pmax_{\ell} \downarrow \OmaxNTI_{\beta}$  as $\ell \to \infty$.
\end{theorem}
We note that for finite systems, convergence happens after a finite number of steps, namely for $\ell$ such that $\Lambda_{\ell} = \sites$. \edit{While the convergence result becomes trivial in the finite case, the formulation of a convex program containing constraints for the Gibbs state still adds a novel tool for the numerical treatment of finite-sized quantum many-body systems. The more interesting case for our results is however that of infinite systems.} Given a fixed computational budget, one can choose $\ell$ and run the corresponding program~\eqref{eq:optintro} and obtain a superset $[\pmin_{\ell}, \pmax_{\ell}]$ of the target interval $[\OminNTI_{\beta}, \OmaxNTI_{\beta}]$. In the cases where $\OminNTI_{\beta} = \OmaxNTI_{\beta}$, e.g., for finite systems at $T > 0$, one can also run the program~\eqref{eq:optintro} with increasing values of $\ell$ until $\pmax_{\ell} \leq \pmin_{\ell} + \eps$, where $\eps$ is some desired accuracy $\eps$. We note that, as previously mentioned, the problem of computing expectation values for equilibrium states is at least QMA-hard so we cannot hope to have fast convergence for all choices of $H$. But we stress that no a priori analysis of convergence speed is required to obtain some desired accuracy: as soon as we get $\pmax_{\ell} \leq \pmin_{\ell} + \eps$, an additive $\eps$ approximation is guaranteed for this instance. 
Numerical results illustrating this algorithm can be found in \suppref{Section~\ref{sec:numerics}}{the Supplementary information Section 3}.

\paragraph{Translation-invariant infinite systems}
Translation-invariant Hamiltonians on the infinite lattice $\sites = \ZZ^D$, i.e., satisfying $h_{X} = h_{X+x}$ for all $x \in \ZZ^D$, play a specifically important role in statistical physics, in particular for understanding phase transitions. For such systems, one often considers the expectation value \emph{per site} of an observable, for example the energy per particle also called the energy density. One way to define expectation values per site is to compute the observable $O$ on any fixed site on a \emph{translation-invariant} equilibrium state of $H$. More precisely, we define the average expectation value of observable $O$ per site to be the interval
 \begin{equation}
 [\Omin_{\beta},\Omax_{\beta}] := \left\{\tr(\rho O) : \rho \text{ is a translation-invariant equilibrium state at temperature $T=1/\beta$} \right\}.
 \end{equation}
By adding to the program~\eqref{eq:optintro} a translation-invariance constraint in $\overline \Lambda$ given by
\begin{align}
\label{eq:translation-inv-constraint}
\tr_{\Sigma^c}(\rho) = \tr_{(\Sigma+x)^{c}}(\rho) \qquad \text{ for all $\Sigma \subset \overline \Lambda$, $x \in \ZZ^d$ such that $ \Sigma + x \subset \overline \Lambda$} ,
\end{align}
Theorem~\ref{thm:main} can be adapted for translation-invariant states, as follows:
\begin{theorem}
\label{thm:mainTI}
Let $\Lambda_0 \subset \Lambda_1 \subset \dots \subset \sites$ be an increasing sequence such that the support of the local observable $O$ is contained in $\Lambda_0$.
For any $\ell \in \NN$, let $\pminTI_{\ell}$ and $\pmaxTI_{\ell}$ be respectively the minimum and maximum values of the convex optimization problems given by \eqref{eq:optintro} together with the constraint~\eqref{eq:translation-inv-constraint}. Then we have
\begin{equation}
\pminTI_{\ell} \leq \Omin_{\beta} \leq \Omax_{\beta} \leq \pmaxTI_{\ell}.
\end{equation}
Furthermore, $\pminTI_{\ell} \uparrow \Omin_{\beta}$ and $\pmaxTI_{\ell} \downarrow \Omax_{\beta}$  as $\ell \to \infty$.
\end{theorem}

One difficulty we would like to highlight regarding the thermodynamic limit concerns the definition of equilibrium states at any given temperature $T\geq 0$. It was actually shown in~\cite{bausch2021uncomputability} that computing local observables on the ground state of infinite systems is undecidable. This might seem to contradict our result. This is not the case however, due to a subtle point in the definition of ground-state observables in the infinite limit.
The authors of~\cite{bausch2021uncomputability} define those as limits of ground-state observables in finite systems, i.e., the observable is computed for the ground-state of a truncated Hamiltonian with open boundary condition $H_{\Lambda} = \sum_{X \subset \Lambda} h_X$, where $\Lambda \subset \ZZ^D$ is finite, and afterwards the limit $\Lambda \uparrow \ZZ^D$ is taken.
The \edit{problem} definition we use in our work, which is standard in the operator algebraic framework, is more general and in particular contains limits of ground-state observables of finite system Hamiltonians with \textit{any choice of boundary conditions}. \edit{To avoid confusion, it should be added that this does not mean that our algorithm can compute the value of an observable for a specific boundary condition. Instead, it gives an outer relaxation of the interval given by all boundary conditions. A boundary condition is not input to the algorithm.} It is simple to construct a Hamiltonian where fixing the boundary condition while taking the limit excludes natural ground states (see \suppref{Remark \ref{rem:gsthermodynamiclimit} and also Remark \ref{rem:gibbsthermodynamiclimit}}{Remark 1.1 and Remark 2.3 in the supplemental information} concerning thermal states of the 2D Ising model).
The comparison however illustrates that even the existence of any convergent algorithm for observables in ground states is far from obvious, which is what we achieve in Theorem \ref{thm:mainTI}.
Considering the operator algebraic definition of equilibrium states leads to the natural computational problem defined as follows.

\begin{definition}\label{def:eop}

\end{definition}
\begin{center}
    \begin{tabularx}{\columnwidth-2cm}{@{}lX@{}}
    \toprule
    \multicolumn{2}{c}{\textsc{Equilibrium Observable Problem}}\\
    \midrule
    \bfseries Input: & \multicolumn{1}{p{0.75\textwidth}}{Local dimension $d$, local Hamiltonian term $h$ with rational coefficients, local observable $O$ with rational coefficients, rational number $a$, temperature $\beta\in[0,+\infty]$} \\
    \bfseries Promise: & Either $\Omin_{\beta}>a$ or  $\Omax_{\beta}<a$\\
    \bfseries Question: & Output YES if $\Omin_{\beta}>a$, NO otherwise\\
    \bottomrule
    \end{tabularx}
\end{center}

Note that this is a promise problem where the objective is to decide if all the equilibrium states have an expectation value per site $>a$ or all the equilibrium states have an expectation value per site $<a$.
\begin{theorem}[Decidability of translation-invariant ground state and thermal observables]
\label{thm:mainDec}
The Equilibrium Observable Problem is decidable.
\end{theorem}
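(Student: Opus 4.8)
The plan is to read off the decision procedure directly from the converging two-sided bounds of Theorem~\ref{thm:main}. Given an instance $(d,h,O,a,b,\beta)$ of the Equilibrium Observable Problem with $a<b$ rational (and $\beta$ rational or $+\infty$), I would first build, for every $\ell\in\NN$, the finite convex problems $\Pmin_{\ell}$ and $\Pmax_{\ell}$ of Theorem~\ref{thm:main}; their descriptions are explicit and computable from the rational input data. These are convex programs over the positive semidefinite cone and the matrix relative entropy cone, and the matrix relative entropy admits an efficiently computable (approximate) membership oracle via matrix eigendecompositions, so the optimal values $\pmin_{\ell}$ and $\pmax_{\ell}$ can be approximated to any prescribed additive accuracy in finite time by a standard convex optimization method, e.g.\ the ellipsoid method with a weak separation oracle; when $\beta=+\infty$ the relative entropy cone does not appear and these are ordinary semidefinite programs.

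The algorithm then iterates over $\ell=1,2,3,\dots$: at stage $\ell$ it computes rationals $\tilde{\mathsf p}^{\min}_{\ell},\tilde{\mathsf p}^{\max}_{\ell}$ within $1/\ell$ of $\pmin_{\ell},\pmax_{\ell}$ respectively, outputs YES and halts if $\tilde{\mathsf p}^{\min}_{\ell}-1/\ell>b$, outputs NO and halts if $\tilde{\mathsf p}^{\max}_{\ell}+1/\ell<a$, and otherwise proceeds to $\ell+1$. Correctness is immediate from the sandwich $\pmin_{\ell}\le\Omin_{\beta}\le\Omax_{\beta}\le\pmax_{\ell}$ supplied by Theorem~\ref{thm:main}: a YES output forces $\Omin_{\beta}\ge\pmin_{\ell}\ge\tilde{\mathsf p}^{\min}_{\ell}-1/\ell>b$, and a NO output forces $\Omax_{\beta}\le\pmax_{\ell}\le\tilde{\mathsf p}^{\max}_{\ell}+1/\ell<a$, which are the correct answers. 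Moreover the two halting tests are mutually exclusive at any given stage, since together they would give $\Omin_{\beta}>b>a>\Omax_{\beta}$, contradicting $\Omin_{\beta}\le\Omax_{\beta}$, so the procedure is well defined.

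For termination I would use the promise together with the monotone convergence $\pmin_{\ell}\uparrow\Omin_{\beta}$ and $\pmax_{\ell}\downarrow\Omax_{\beta}$ from Theorem~\ref{thm:main}. Under the promise exactly one of $\Omin_{\beta}>b$ or $\Omax_{\beta}<a$ holds. In the first case, set $\delta:=\Omin_{\beta}-b>0$; there is an $L$ with $\pmin_{\ell}>b+\delta/2$ for all $\ell\ge L$, and for $\ell\ge\max\{L,\lceil 4/\delta\rceil\}$ we get $\tilde{\mathsf p}^{\min}_{\ell}-1/\ell\ge\pmin_{\ell}-2/\ell>b$, so the first test fires and the procedure halts with the correct answer; the case $\Omax_{\beta}<a$ is symmetric using the second test. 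Hence the procedure always halts with the correct output, which is exactly decidability. The only ingredient here that is not an immediate corollary of Theorem~\ref{thm:main} is the effectiveness of approximately solving the convex programs $\Pmin_{\ell},\Pmax_{\ell}$ — i.e.\ that optimization over the matrix relative entropy cone can be carried out to arbitrary precision in finite time — and I expect that to be the one point requiring care, though it is routine given that the relative entropy function is explicitly and approximably computable.
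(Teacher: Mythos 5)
Your reduction of decidability to ``compute $\pmin_{\ell},\pmax_{\ell}$ to additive accuracy $1/\ell$ and test against the thresholds'' is sound, and your correctness/termination argument from the sandwich and the monotone convergence of Theorem~\ref{thm:main} matches the structure of the paper's algorithm (the paper tests feasibility of the relaxation augmented by $\tomega(O)\gtrless (a+b)/2$, which is an equivalent framing). The genuine gap is exactly the point you defer as ``routine'': the effective computability of those approximate optimal values. In the Turing model this is the entire content of the proof. The ellipsoid method with a weak separation oracle solves only the \emph{weak} optimization problem: it returns a value close to the optimum over points lying a ball's-radius deep inside the feasible set, and it cannot distinguish a lower-dimensional or thin feasible set from an empty one. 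Here the feasible set of \eqref{eq:opt2} has empty interior (it carries the equality constraints \eqref{eq:opt2-norm}, \eqref{eq:opt2-TI}, \eqref{eq:opt2-comm}), so the guarantee $|\tilde{\mathsf p}^{\min}_{\ell}-\pmin_{\ell}|\leq 1/\ell$ that your halting tests rely on is not delivered by the method you invoke; without it, a test could fire wrongly or never fire.

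Concretely, two ingredients are missing. First, for $\beta<\infty$ the constraint \eqref{eq:opt2-EEB} involves the matrix logarithm, which is not semialgebraic and whose (sub)gradients blow up as the moment matrices approach singularity, so neither exact algebraic methods nor a well-behaved separation oracle are available off the shelf. The paper replaces $D_{op}$ by one-sided SDP-representable approximations $D_{op}^{[m]}$ (so every level remains a valid relaxation) and needs these to converge to $D_{op}$, which requires the arguments of the logarithm to be uniformly bounded away from $0$. That bound is supplied by a new a priori estimate (Lemma~\ref{lem:momentLower}, adapted from Bakshi et al.) showing some KMS state has marginals $\rho_\Lambda\psd C_\Lambda>0$ with an explicitly computable $C_\Lambda$; the strengthened constraint \eqref{eq:lbmomconstraint} keeps the problem feasible under the promise while making the approximation uniform. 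Your proposal has no substitute for this step. Second, even once everything is a rational SDP (as at $\beta=\infty$), the paper does not numerically approximate optimal values at all: it decides exact feasibility of the augmented semialgebraic systems by quantifier elimination over the reals, which is what actually yields a terminating, provably correct procedure. To repair your argument you would either need to adopt these two ingredients, or else prove from scratch that the specific programs $\Pmin_\ell,\Pmax_\ell$ admit computable inner-ball certificates on their affine hulls together with a uniformly Lipschitz treatment of the relative entropy constraint --- which is essentially reconstructing the paper's proof.
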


\paragraph{Main ingredients} 
 For clarity of the discussion, we focus here on finite systems. A key ingredient to obtain our algorithms is to use an operator algebraic characterization of equilibrium states. These are expressed solely in terms of $\tr(\rho b)$ for  operators $b$. For $T=0$, the common definition of an equilibrium state is a state supported on the eigenspace of $H$ with the minimum eigenvalue.  It turns out that an equivalent operator algebraic formulation of this condition is:
\begin{equation}
\label{eq:gs}
\tr(\rho a^* [H,a]) \geq 0 \;\; \forall a
\end{equation}
where $[H,a]$ is the commutator and $a$ is an arbitrary observable. Intuitively, the condition above expresses the fact that the energy of $\rho$ has to increase under any perturbation. There are two crucial facts about \eqref{eq:gs}: First, if $a$ is supported on a small set $\Lambda$ of sites then the condition~\eqref{eq:gs} only depends on $\tr(\rho b)$ for operators $b \in \overline{\Lambda}$, where we recall that $\overline{\Lambda} = \Lambda \cup \bdex \Lambda$. Second, the inequalities in~\eqref{eq:gs} can be concisely captured by a convex constraint involving the positivity of a Hermitian matrix that depends linearly on $\rho$. By restricting the operators $a$ in \eqref{eq:gs} to be supported on $\Lambda$, this leads to~\eqref{eq:optintro} for $T=0$.

At positive temperature $T=1/\beta > 0$, the situation is more complicated. As previously mentioned, the common definition of a 
thermal equilibrium state is given by $\rho = \frac{e^{-\beta H}}{\tr(e^{-\beta H})}$. It turns out that an equivalent operator algebraic formulation is via the Kubo-Martin-Schwinger (KMS) condition:
\begin{equation}
\label{eq:kms1}
\tr(\rho ba) = \tr(\rho a e^{-\beta H} b e^{\beta H}) \quad \forall a,b.
\end{equation}
In the classical case, when the Hamiltonian is diagonal in a basis $\{\ket{\sigma}\}_{\sigma}$, Equation \eqref{eq:kms1} reduces to 
 \begin{equation}
 \label{eq:kmsclassical}
 \tr(\rho \proj{\sigma}) = \tr(\rho \proj{\sigma'}) \exp( -\beta (\bra{\sigma}H\ket{\sigma} - \bra{\sigma'}H\ket{\sigma'})), 
 \end{equation}
 for all $\sigma, \sigma'$. 
 When $\sigma'$ is obtained from $\sigma$ by flipping a single spin, these are known as the spin-flip equations and are exploited in the bootstrap approach for the classical Ising model \cite{bootstrapisinglattice}.

 To obtain Theorem \ref{thm:main}, a natural idea (as was done for $T=0$) is to relax the set of $\beta$-KMS states and impose the condition \eqref{eq:kms1} for a subset of observables $a,b$ supported on some small $\Lambda$. However, the main obstruction one is faced with is that even if $a,b$ are local observables the expression $a e^{-\beta H} b e^{\beta H}$ is in general not local, except for commuting Hamiltonians.
 As such, even though \eqref{eq:kms1} form a set of linear equations on $\rho$, they involve the expectation of $\rho$ on nonlocal observables.
 We circumvent this issue by using another characterization of thermal equilibrium states via so-called Energy-Entropy Balance (EEB) inequalities \cite{arakisewell}: this is an infinite set of scalar convex inequalities, each indexed by an operator $a$, which carve out the set of $\beta$-KMS states. On the one hand these inequalities are better suited than the $\beta$-KMS condition because they preserve locality, i.e., they only require the expectation value of the state $\rho$ on a finite region around the support of $a$. On the other hand, a drawback of these inequalities is that there are infinitely many of them, and unlike the inequalities~\eqref{eq:gs}
 it is not possible to express them as a linear positive semidefinite constraint. A key ingredient to prove our theorem is to formulate a matrix generalization of such inequalities using the matrix-valued relative entropy function \eqref{eq:mre-intro}. We show that an infinite set of scalar Energy-Entropy Balance inequalities can be compactly formulated by a single nonlinear convex matrix inequality of the form $D_{op}(A_{\rho}\|B_{\rho}) \nsd \beta C_{\rho}$ for some suitable matrices $A_{\rho},B_{\rho},C_{\rho}$ that depend linearly on $\rho$ (see \eqref{eq:optintro-EEB}). To the best of our knowledge, this is the first application of the matrix relative entropy function in the design of convex relaxations in quantum information. 
Optimization problems involving this function can be solved using interior-point methods \cite{sc} or via semidefinite programming approximations \cite{logapprox}.

\paragraph{Quantifying convergence speed}
Theorems~\ref{thm:main} and~\ref{thm:mainTI} do not provide quantitative guarantees on the convergence speed. As previously mentioned, one cannot hope to prove general fast convergence guarantees as even the special case where the observable $O$ corresponds to energy is unlikely to have a polynomial-time quantum algorithm, even when $D=1$~\cite{gottesman2010quantum}. 

However, one can expect provably fast convergence for some classes of Hamiltonians. We illustrate this by showing exponential convergence in two regimes for which it is known that no phase transitions can occur. For translation-invariant Hamiltonians in the high-temperature regime and in one dimension at any nonzero temperature, the set of equilibrium states reduces to a singleton, and for any local observable $O$ we have $\Omin_{\beta} = \Omax_{\beta} = \OTI_{\beta}$. Assuming a commuting local Hamiltonian $H$, the theorem below shows exponential convergence to $\OTI_{\beta}$ in the level $\ell$ of the convex optimization hierarchy.
\begin{theorem}
[Quantitative convergence rate]
\label{thm:main_quantitative}
Consider a translation-invariant local Hamiltonian $H$ on $\Gamma = \ZZ^D$ with $D\leq 2$. Assume furthermore that $H$ is \emph{commuting}, i.e., $[h_X,h_Y] = 0$ for all $X,Y \subset \ZZ^D$. For $D=1$ and $\beta_1=\infty$ or for $D=2$ and some $\beta_1 > 0$, we have for all $0 \leq \beta < \beta_1$, and for any local observable $O$,  $\Omin_{\beta} = \Omax_{\beta} = \OTI_{\beta}$ and, using the same notation as in 
Theorem~\ref{thm:main}
\begin{equation}
\max\{ \pmax_{\ell} - \OTI_{\beta} , \OTI_{\beta} - \pmin_{\ell} \} \leq c_1 \|O\| e^{-c_2 \ell}
\end{equation}
for some constants $c_1,c_2 > 0$ depending on the dimension and the interaction.
\end{theorem}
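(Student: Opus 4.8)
The plan is to prove the quantitative bound by showing that the \emph{only} states that can satisfy the constraints of $\Pmax_\ell$ (or $\Pmin_\ell$) are those that already agree with the unique equilibrium state on a central region up to an error exponentially small in $\ell$; the locality of the modular flow for commuting $H$ is what makes the level-$\ell$ constraints ``tight'' enough for this, and exponential decay of boundary effects (available for $D\le 2$ in the stated temperature ranges) is what converts ``tight'' into ``exponentially tight''. First, $\Omin_\beta=\Omax_\beta=\<O\>_\beta$ is the uniqueness of the translation-invariant equilibrium state: for $D=1$ and any finite $\beta$ this is Araki's theorem for one-dimensional finite-range quantum lattices, and for $D=2$ and $\beta<\beta_1$ it is a high-temperature uniqueness statement (cluster expansion / Dobrushin-type argument), both of which only simplify when $H$ is commuting. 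Given uniqueness and the containment $\pmin_\ell\le\Omin_\beta\le\Omax_\beta\le\pmax_\ell$ from Theorem~\ref{thm:main}, it suffices to prove: for every $\omega$ feasible for $\Pmax_\ell$ and every fixed local observable $O$, one has $|\omega(O)-\<O\>_\beta|\le c_1\|O\|e^{-c_2\ell}$; applying this to the optimizers of $\Pmin_\ell$ and $\Pmax_\ell$ (and noting $\supp(O)$ sits well inside the relaxation region once $\ell$ is large) gives the theorem.

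The first ingredient is that for a commuting Hamiltonian the modular flow $b\mapsto e^{-\beta H}be^{\beta H}$ preserves locality: if $b$ is supported in a set $S$ then $e^{-\beta H}be^{\beta H}=e^{-\beta H_S}be^{\beta H_S}$, where $H_S$ is the (finite) sum of interaction terms meeting $S$, so the image is supported in a bounded neighbourhood of $S$. Hence the Energy--Entropy Balance inequalities — equivalently the matrix inequality $D_{op}(A_\omega\|B_\omega)\nsd\beta C_\omega$ on which the hierarchy is built — for observables $a$ supported in a box $B_m$ involve only the values of $\omega$ on a slightly enlarged box, and for $m$ a fixed fraction of $\ell$ they are \emph{all} imposed by $\Pmax_\ell$. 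Since the level-$\ell$ relaxation also forces the moment matrix restricted to observables on $B_m$ to be positive semidefinite, $\omega|_{B_m}$ is a genuine density operator on $B_m$, and for every $a$ supported in the interior $B_m^\circ$ (sites at distance more than the interaction range from $\partial B_m$) one has $[H,a]=[H_{B_m},a]$, so $\omega|_{B_m}$ satisfies the \emph{finite-volume} EEB inequalities of the Hamiltonian $H_{B_m}$ for all such $a$. By the finite-volume Araki--Sewell correspondence this is the quantum (commuting) analogue of the DLR equations: $\omega|_{B_m}$ is a Gibbs state of $H_{B_m}$ modulo a correction operator localized within the interaction range of $\partial B_m$.

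The second and main ingredient is insensitivity of commuting Gibbs states to boundary conditions. The infinite-volume equilibrium state $\<\cdot\>_\beta$ is itself ``Gibbs in $B_m^\circ$'' with its own effective boundary term, so $\omega|_{B_m}$ and $\<\cdot\>_\beta|_{B_m}$ satisfy the same local equations in $B_m^\circ$ and differ only through terms supported near $\partial B_m$. We then need that the marginals on a concentric box $B_k$ with, say, $k=\lfloor m/2\rfloor$ differ by at most $e^{-c(m-k)}$ in trace norm. For $D=1$ this is exponential clustering of one-dimensional Gibbs states via the transfer operator, valid at every finite $\beta$ (with a rate degrading as $\beta\to\infty$); for $D=2$ it is a high-temperature statement obtained from cluster expansion or quantum-belief-propagation bounds for commuting Gibbs states, valid for $\beta<\beta_1$, and this is precisely the source of the dimension restriction. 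Combining with the previous paragraph, $\norm{\omega|_{B_k}-\<\cdot\>_\beta|_{B_k}}_1\le c_1 e^{-c_2 m}$; choosing $m$ proportional to $\ell$ and using $\supp(O)\subseteq B_k$ gives $|\omega(O)-\<O\>_\beta|\le\|O\|\,\norm{\omega|_{B_k}-\<\cdot\>_\beta|_{B_k}}_1\le c_1\|O\|e^{-c_2\ell}$.

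The main obstacle is the rigorous form of the boundary-insensitivity step for the \emph{effective} boundary term produced by a feasible $\omega$: this term need not be a physical (positive) interaction, so one cannot invoke a statement about Gibbs states with genuine boundary fields verbatim, and must instead argue directly at the level of the local KMS/EEB equations that $\omega|_{B_m}$ satisfies. A clean route is to peel $\partial B_m$ inward layer by layer — a commuting-Hamiltonian transfer/contraction argument in which each layer is eliminated using the local equations while contracting towards the infinite-volume value — and to show the contraction factor is strictly less than $1$; here commutativity is what keeps every step local, while the hypotheses $D\le 2$ and $\beta<\beta_1$ (or $\beta<\infty$ for $D=1$) are exactly what is needed for the contraction to be effective. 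A secondary, more routine point is to check from the explicit description of the hierarchy that $\Pmax_\ell$ really does impose full local positivity on $B_m$ together with all matrix-EEB inequalities for $a\in B_m^\circ$ when $m$ is a constant fraction of $\ell$.
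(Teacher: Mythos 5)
Your high-level skeleton matches the paper's: reduce to showing that any feasible point of the level-$\ell$ relaxation is a boundary-perturbed version of the finite-volume Gibbs state, then invoke decay of correlations and local indistinguishability (Araki/transfer-operator arguments in 1D at all temperatures, cluster-expansion bounds in 2D at high temperature). But the step you yourself flag as ``the main obstacle'' is exactly the step you have not proved, and your proposed route to it does not work as stated. From the EEB/matrix-relative-entropy constraints for $a$ supported in the interior of a box $B_m$ you only get the EEB inequalities of $H_{B_m}$ for \emph{interior-supported} $a$; the finite-volume Araki--Sewell characterization needs them for all $a\in\cA_{B_m}$, so it does not deliver ``Gibbs modulo a boundary correction,'' and the layer-peeling contraction you sketch is not carried out. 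The paper sidesteps this entirely: for commuting Hamiltonians it uses a \emph{different} relaxation, built on the local KMS \emph{equalities} $\omega(ba)=\omega(a e^{-\beta \tilde H_\Lambda} b e^{\beta \tilde H_\Lambda})$ (which are local and bilinear precisely because $H$ commutes), and from these equalities it proves an exact structural lemma (Lemma~\ref{lem:commutingPerturbed}): every feasible $\omega$ satisfies $\omega(O)=\omega^W(Oe^{-\beta W})/\omega^W(e^{-\beta W})$ where $\omega^W$ is an exact product of the finite Gibbs state $\varphi^\Lambda$ with some boundary state and $W$ is the surface term. All clustering estimates are then applied to the genuine Gibbs state $\varphi^\Lambda$, with the boundary operator $Y=(\mathrm{id}\otimes\hat\omega)(e^{-\beta W})$ treated as a localized perturbation. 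Your plan, anchored to the EEB relaxation, has no analogue of this exact lemma, and the convergence statement in Theorem~\ref{thm:maincomm1} is in fact stated for the KMS-equality hierarchy \eqref{eq:optcomm}, not for \eqref{eq:opt2}.

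A second, smaller gap: you attribute the restriction to $D\le 2$ solely to where high-temperature clustering is ``available,'' but clustering at high temperature holds in all dimensions. The actual obstruction is quantitative: the surface term $W$ has norm of order $\ell^{D-1}$, so $\|e^{\pm\beta W}\|\le e^{c\beta\ell^{D-1}}$, and this must be beaten by the clustering rate $e^{-\gamma\ell}$. For $D=1$ the surface term is $O(1)$ (hence any finite $\beta$ works); for $D=2$ one needs $2\beta r\|h\|<\gamma$, i.e.\ $\beta$ small; for $D\ge 3$ the growth $e^{c\beta\ell^2}$ cannot be beaten at any fixed $\beta>0$. Your writeup should identify this trade-off explicitly, since it is what fixes both the dimension restriction and the threshold $\beta_1$.
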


This result addresses an open problem raised in~\cite{cho2023bootstrap} about the speed of convex optimization  hierarchies for the (classical) Ising model, in particular whether exponential convergence holds away from criticality. Theorem~\ref{thm:main_quantitative} establishes such a statement for the more general class of commuting local Hamiltonians.

\paragraph{Related work}
A special case of the problem considered in this work is when the observable $O$ is the energy. Certified algorithms do exist for the ground energy of local Hamiltonians: one can combine semidefinite programming relaxations~\cite{Mazziotti2006,Pironio2010,Baumgratz2012,Barthel2012} for lower bounds and variational methods such as tensor networks~\cite{bridgeman2017,cirac2021,white1992} for upper bounds. 
Using such two-sided bounds on the ground energy, the recent works~\cite{wang2023,han2020} obtain bounds for expectation values of local observables in the ground state, although no convergence guarantees are given. The approach based on imposing the additional constraint \eqref{eq:gs}, which leads to convergence guarantees, was proposed independently and concurrently in \cite{navascuesKKT}, where it was derived as a special case of a method to strengthen semidefinite relaxations for noncommutative polynomial optimization problems by incorporating optimality conditions as additional constraints. These papers however do not discuss the case of positive temperature since the equilibrium states are not characterized by a noncommutative polynomial optimization problem which is linear in the state.
One could try to use lower bounds for the free energy coming from convex relaxations~\cite{Poulin2011}, but it is not clear how to use such bounds for observables.
We note however that for classical systems, thermal observables can be obtained via the bootstrap approach~\cite{bootstrapisinglattice,poland2019conformal}, by directly imposing the KMS conditions \eqref{eq:kmsclassical}. The resulting hierarchies were shown to be asymptotically convergent \cite{cho2023bootstrap}, even if only a weaker set of constraints are imposed. For quantum systems, to the best of our knowledge, Theorem~\ref{thm:main} provides the first certified algorithms for general observables in thermal states.

Note that, in some restricted settings, such as 1D gapped systems~\cite{landau2013} at zero temperature or 1D systems at positive temperature~\cite{hastings2006solving,molnar2015approximating,kuwahara2021,fawzi2023}, there are provably efficient algorithms computing representations of equilibrium states and thus expectation values, but such algorithms are tailored to these settings. In addition, for arbitrary dimensions and high temperatures provably efficient algorithms for computing the free energy exist~\cite{kuwahara2020clustering,harrow2020,mann2021}, which can be used to compute observables~\cite{bravyi2021complexity}.

\paragraph{Outlook} 
The preliminary numerical experiments in \suppref{Section~\ref{sec:numerics}}{the supplemental information} demonstrate that the approach proposed in this paper is not only theoretical, and with additional efforts on the computational side, can play an important role alongside other classical algorithms. For example, because our algorithm produces certified bounds, it can be used to rigorously benchmark variational algorithms for quantum many-body systems \cite{wu2023variational}. In addition, one can easily identify several directions for improving the accuracy of the algorithm presented here:
First, one can exploit additional convex constraints  that are known to hold for marginals of equilibrium states. For example, one can add entropy-based inequalities as proposed in \cite{fawzi2023entropy}. Furthermore, other valid inequalities tailored to specific models can be added to the convex relaxation such as reflection positivity which was used in \cite{bootstrapisinglattice} for the Ising model. Second, it would be very interesting to combine the methods from this paper with variational methods (e.g., from tensor networks) to obtain more accurate bounds, such as the recent work on the ground energy problem  \cite{kull2022lower}. Another natural question is whether one can use our algorithms within hybrid classical-quantum algorithms for quantum simulation problems~\cite{zhang2022computing}.
On the analytical side, it remains open whether convergence guarantees for other classes of Hamiltonian can be proven. Promising candidates would be regimes in which other classical algorithms are efficient such as thermal states in 1D and at high-temperature  (for general noncommuting Hamiltonians) as well as gapped ground states in 1D.

\section{Preliminaries}
\label{sec:preliminaries}

We consider spin systems on a (potentially infinite) discrete set of sites $\sites$ and we adopt the operator algebraic point of view. We describe the setup briefly here, and we refer to \cite{nachtergaele2004quantum} or \cite[Section 6.2]{brbook} for more details.

For any site $x \in \sites$, the local Hilbert space $\cH_x \simeq \CC^d$ has dimension $d$. The Hilbert space associated to a finite subset $X \subset \sites$ is the tensor product $\cH_X = \otimes_{x \in X} \cH_x$, and we let $\cA_X = \cB(\cH_X)$ be the algebra of observables supported on $X$. 
Let 
\begin{equation}
\cAloc = \bigcup_{\substack{X \subset \sites\\ X \text{ finite}}} \cA_X
\end{equation}
be the set of local observables, and $\cA = \overline{\cAloc}$ be its completion with respect to the operator norm, i.e., $\cA$ is the $C^*$-algebra of quasi-local observables. (Obviously, when $\sites$ is finite then $\cAloc = \cA = \overline{\cA}$ is the full algebra of complex matrices of size $d^{|\sites|} \times d^{|\sites|}$.) A state is a linear functional $\omega:\cA \to \CC$ such that $\omega(1) = 1$, $\omega(a^*) = \overline{\omega(a)}$, and $\omega(a^* a) \geq 0$ for all $a \in \cA$.

\paragraph{Hamiltonians} Consider a Hamiltonian $H$, which can be formally written as
\begin{equation}
H = \sum_{\substack{X \subset \sites\\X \text{ finite}}} h_X
\end{equation}
where $h_X \in \cA_{X}$ are the local interaction terms. We assume that the Hamiltonian $H$ is local, i.e., there exists a constant $r$ such that $h_{X}$ is nonzero only for $X$ having size at most $r$, for every site $x \in \Gamma$, $|\{X \subset \Gamma : x \in X, h_{X} \neq 0\}| \leq r$, and that $\|h_X\| \leq 1$ for all $X$.
In the case of the infinite lattice $\sites = \ZZ^D$, we say that the Hamiltonian is translation-invariant if $h_{X+x} = \tau_x(h_X)$ for all $X \subset \ZZ^D$ and $x \in \ZZ^D$, where $\tau_x:\cA_X \to \cA_{X+x}$ is the translation operator by $x$.
If $\Lambda$ is a finite subset of $\sites$, we let
\begin{equation}
\label{eq:HLambda}
H_{\Lambda} = \sum_{X \subset \Lambda} h_X \in \cA_{\Lambda}
\end{equation}
and
\begin{equation}
\label{eq:HtildeLambda}
\tilde{H}_{\Lambda} = \sum_{X \cap \Lambda \neq \emptyset} h_X \in \cA_{\overline \Lambda}
\end{equation}
where $\overline \Lambda = \Lambda \cup \bdex \Lambda$, and $\bdex \Lambda$ is the external boundary of $\Lambda$, i.e., 
\begin{equation}
    \label{eq:bdexLambda}
    \bdex \Lambda = \{y \in \Lambda^c : \exists Y \subset \ZZ^D, h_Y \neq 0, y \in Y, \Lambda \cap Y \neq \emptyset\}.
\end{equation}
When the lattice $\sites$ is infinite, $H$ is not a well-defined element of the algebra $\cA$; however for any $a \in \cAloc$, we formally write $[H,a]$ for the well-defined element of $\cAloc$ that is equal to
\begin{equation}
[H,a] = [\tilde{H}_{\Lambda},a],
\end{equation}
when $a \in \cA_{\Lambda}$.

\paragraph{Ground states} A state $\omega:\cA\to \CC$ is called a \emph{ground state} of $H$ if
\begin{equation}
\label{eq:groundstate}
\omega(a^*[H,a]) \geq 0\quad \forall a \in \cAloc.
\end{equation}
In the case of finite systems, this condition is equivalent to saying that the density matrix of $\omega$ is supported on the eigenspace of $H$ of minimal eigenvalue. To see this, we note that for finite systems, states are described by density matrices $\omega(a)=\tr[\rho a]$.
Choosing the operators $a$ in the above definition for a given eigenbasis of the Hamiltonian $e_i$ as $e_j e_i^*$, we see that
\begin{equation}
\rho_{ii}E_j-\rho_{ii}E_i\ge0
\end{equation}
where we denote matrix elements $\rho_{ij}$ in the energy eigenbasis and the corresponding eigenvalues by $E_i$.
This is fulfilled only if $\rho_{ii}=0$ for all $i$ corresponding to eigenvectors that are not in the ground-space sector.
By writing
\begin{equation}
\rho=\left(\begin{matrix}
    A&B\\
    B^*&C
\end{matrix}\right)
\end{equation}
where the blocks correspond to the ground-space and its orthogonal complement, we can see that $C$ has nonpositive diagonal entries, but is also positive semidefinite and thereby $0$.
By the Schur complement lemma we can then conclude that $B=0$ and thereby $\rho$ is supported on the ground space.

\begin{remark}[Relation to thermodynamic limit]
\label{rem:gsthermodynamiclimit}
Consider the case of a Hamiltonian on the infinite lattice $\sites = \ZZ^D$. It is immediate to verify that the limit of ground states of finite Hamiltonians $\tilde H_{\Lambda}$ of increasing size will automatically satisfy condition \eqref{eq:groundstate}. Actually this is even true for any choice of ``boundary condition'' $\Delta_{\Lambda^c}$ acting outside $\Lambda$. Indeed, one can easily show that if $\omega$ is such that
    \begin{equation}
    \omega(a) = \lim_{\Lambda \uparrow \ZZ^D} \<\Phi^{\Lambda} | a | \Phi^{\Lambda}\>
    \end{equation}
    where for each $\Lambda$, $\Phi^{\Lambda}$ is a ground state of the finite Hamiltonian $\tilde H_{\Lambda} + \Delta_{\Lambda^c}$, then $\omega$ will satisfy \eqref{eq:groundstate}.
Imposing such arbitrary boundary conditions while taking the limit may actually be needed in order to capture the different ground states in the thermodynamic limit.

To illustrate this, consider the following one-dimensional classical system with three states $\{0,1,2\}$ per site given by the Hamiltonian
\begin{align}
    h_{i,i+1}(s,s') = \left\{ \begin{array}{cc}
       0  & \text{ if } s = s' \in \{0,1\} \\
       2  & \text{ if } (s, s') \in \{(0,1), (1,0), (2,2), (2,1), (0,2), (1,2)\}  \\
       -1 & \text{ if } s = 2, s' = 0. \\
    \end{array}\right.
\end{align}
For the finite Hamiltonian $\tilde H_{\Lambda}$ with $\Lambda = \{-\ell, \dots, \ell\}$, the unique ground state is the sequence $(2,0,\ldots,0)$ with energy $-1$. \edit{This makes it also the ground state of the infinite system corresponding to the limit with open boundary condition.} Importantly, note that for the observable $O = \proj{0}$ at site $0$, we have that for any $\ell \geq 1$, its value in the ground state of $H_{\Lambda}$ is $1$. However, another valid ground state of the infinite chain is the all-1 state. For this ground state in the thermodynamic limit, the observable takes a value of $0$. \edit{It is realized by adding a boundary term $h_{\ell+1}(s)=-2\delta_{1,s}$ as it lowers the energy of the  all-1 configuration below the open boundary ground state.}

\edit{Beyond this simple but artificial example, there are many more classical and quantum models exhibiting an analogous effect that are of current interest in quantum many-body physics. Due to the more involved definitions and solutions of these models we refrain from giving them explicitly. Besides ubiquitous examples like the 1D Ising model which can be solved, in \cite{cha2017} the example of degenerate ground states of the quantum double model and an explicit construction of boundary conditions is given.}

\end{remark}

\paragraph{Thermal equilibrium states} For a finite system, the Gibbs state at inverse temperature $\beta=1/T \geq 0$ is given by the density matrix $\rho = e^{-\beta H}/\tr e^{-\beta H}$. An alternative characterization is via the so-called 
Kubo-Martin-Schwinger (KMS) conditions, which takes the form (for finite systems)
\begin{equation}
\label{eq:kmsfinite}
\omega(ba) = \omega(ae^{-\beta H} b e^{\beta H}) \quad \forall a,b \in \cAloc.
\end{equation}
It is easy to verify that the Gibbs state $\omega(a) = \tr(e^{-\beta H} a)/\tr e^{-\beta H}$ satisfies the equality conditions above, and that it is also the unique such state. To state the KMS-conditions for infinite systems, one first needs to introduce the time evolution operator
\begin{equation}
\alpha_t(a) = \lim_{\Lambda \uparrow \Gamma} e^{iH_{\Lambda} t} a e^{-iH_{\Lambda} t} \qquad \forall a \in \cAloc, t\in\RR
\end{equation}
where the limit is taken over a suitable sequence of finite subsets $\Lambda$.
For the local Hamiltonian interactions we consider here, $\alpha_t$ exists and is strongly continuous \cite[Theorem 6.2.4]{brbook}.
For a given time-evolution operator, there is a dense subset $\cA_\alpha\subset\cA$ such that $F(\alpha_t(a))$ is entire for all $a\in\cA_\alpha$ and bounded linear functionals $F$ \cite[Proposition 2.5.22]{brbook1}.
We say that a state $\omega:\cA\to \CC$ is a $\beta$-KMS state if \cite[Definition 5.3.1]{brbook}
\begin{equation}
    \label{eq:kms}
    \omega(ba) = \omega(a\alpha_{i\beta}(b)) \quad \forall a,b \in \cA_\alpha.
\end{equation}
In fact, it is sufficient to ensure this condition for any dense $\alpha$-invariant *-subalgebra of $\cA_\alpha$.
The set of $\beta$-KMS states will be denoted by $\KMS_{\beta}$:
\begin{equation}
\label{eq:Kbeta}
\KMS_{\beta} = \left\{\omega:\cA\to \CC : \omega(1) = 1, \; \omega(a^* a) \geq 0 \; \forall a \in \cAloc,\text{ and } \eqref{eq:kms} \right\}.
\end{equation}
For translation-invariant Hamiltonians on the lattice $\sites = \ZZ^D$ we will be mostly interested in translation-invariant KMS-states
\begin{equation}
    \label{eq:KbetaTI}
    \KMSTI_{\beta} = \left\{ \omega: \cA \to \CC : \omega \in \KMS_{\beta} \text{ and } \omega(a)=\omega(\tau_x(a))\; \forall a\in\cAloc,x\in\mathbb{Z}^D \right\}.
\end{equation}
The set $\KMS_{\beta}$ (and $\KMSTI_{\beta}$ for translation-invariant Hamiltonians) is nonempty for all $\beta \geq 0$. This can be shown by taking an appropriate limit of finite Gibbs states on a sequence of $\Lambda \uparrow \Gamma$. An important aspect of infinite systems (i.e., systems in the thermodynamic limit) is that the set of thermal equilibrium states $\KMS_{\beta}$ is not necessarily reduced to a singleton. The existence of many $\beta$-KMS states signals the presence of different thermodynamic phases.

\section{Convex relaxations via energy-entropy balance inequalities}

Our goal is to obtain a convex relaxation for the marginals of equilibrium states $\omega \in \KMS_{\beta}$.  A natural way to obtain a convex relaxation is to impose only a subset of the $\beta$-KMS equality conditions \eqref{eq:kms} (which corresponds to \eqref{eq:kmsfinite} for finite systems). The problem with the KMS linear equality constraints is that even if $a,b$ are local observables, the observable $a \alpha_{i\beta}(b)$ is in general not local (unless, say, the Hamiltonian is commuting).

\subsection{Energy-entropy balance inequalities}

The energy-entropy inequalities is a family of convex inequalities that characterize $\beta$-KMS states and that only involve local observables of $\omega$. They were first derived by Araki and Sewell in \cite{arakisewell}.

\begin{theorem}[{Energy-entropy balance (EEB) inequalities, see \cite[Theorem 5.3.15]{brbook}}]
\label{thm:EEB}
Let $H$ be a local Hamiltonian. Then $\omega:\cA\to \CC$ is a $\beta$-KMS state if, and only if, it satisfies\footnote{The function $x\log(x/y)$ is defined as equal to 0 if $x=0$ and $+\infty$ if $x>0$ and $y=0$.}
\label{eq:EEB}
\begin{subequations}
\begin{align}
& \omega([H,a]) = 0 \quad \forall a \in \cAloc,\label{eq:EEBcomm}\\
& \omega(a^* a) \log\left(\frac{\omega(a^* a)}{\omega(aa^*)}\right) \leq \beta \omega(a^* [H,a]) \quad \forall a \in \cAloc.\label{eq:EEBineq}
\end{align}
\end{subequations}
\end{theorem}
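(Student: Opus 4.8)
This is the classical Araki--Sewell characterization \cite{arakisewell} (see \cite[Theorem 5.3.15]{brbook}); my plan is to establish the two implications separately and to be explicit about where the real content sits.

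For the easy direction ($\beta$-KMS $\Rightarrow$ EEB): I would first use that $\beta$-KMS states are time-invariant, so $\tfrac{d}{dt}\omega(\alpha_t(a))|_{t=0}=0$ for all $a\in\cAloc$, which is precisely \eqref{eq:EEBcomm} since the generator of $\alpha_t$ sends $a$ to $i[H,a]$. For \eqref{eq:EEBineq}, fix $a\in\cAloc$ and consider the autocorrelation $g(t)=\omega(a^*\alpha_t(a))$. By time-invariance $g$ is continuous and positive-definite, so by Bochner's theorem $g(t)=\int_{\RR}e^{i\mu t}\,d\nu(\mu)$ for a finite positive measure $\nu$ with $\nu(\RR)=\omega(a^*a)$. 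Differentiating gives $\omega(a^*[H,a])=-i g'(0)=\int\mu\,d\nu$; the standard analyticity of $g$ on the strip $0\le\Im z\le\beta$, together with $g(i\beta)=\omega(aa^*)$ from the KMS condition, gives $\omega(aa^*)=\int e^{-\beta\mu}\,d\nu$. If $\omega(a^*a)=0$ both sides of \eqref{eq:EEBineq} vanish (the right-hand side by Cauchy--Schwarz); otherwise I normalize $\nu$ to a probability measure and apply Jensen's inequality to the convex function $x\mapsto e^{-\beta x}$, giving $\int e^{-\beta\mu}\,d\nu\ge\exp(-\beta\int\mu\,d\nu)$, which becomes \eqref{eq:EEBineq} after taking logarithms and multiplying by $\omega(a^*a)$.

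For the hard direction (EEB $\Rightarrow$ $\beta$-KMS): condition \eqref{eq:EEBcomm} says $\omega$ annihilates $\delta(a)=i[H,a]$ for all local $a$; since $\cAloc$ is a core for the generator $\delta$ of $\alpha_t$ (valid because $H$ is finite-range, \cite[Theorem 6.2.4]{brbook}) and $\delta\alpha_t=\alpha_t\delta$, this propagates to $\tfrac{d}{dt}\omega(\alpha_t(a))\equiv 0$, so $\omega$ is time-invariant. Consequently, for each $a\in\cAloc$ the positive-definite functions $g(t)=\omega(a^*\alpha_t(a))$ and $h(t)=\omega(\alpha_t(a)a^*)$ have Bochner measures $\nu_a,\sigma_a\ge0$ of total mass $\omega(a^*a)$ and $\omega(aa^*)$, and (after polarization) $\omega\in\KMS_\beta$ is equivalent to the spectral identity $d\sigma_a(\mu)=e^{-\beta\mu}\,d\nu_a(\mu)$ holding for all $a\in\cAloc$. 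To reach this identity I would extend \eqref{eq:EEBineq} from $\cAloc$ to the time-smeared observables $a_f=\int f(s)\alpha_s(a)\,ds$ ($f$ Schwartz with $\hat f$ compactly supported), for which $\nu_{a_f}=|\hat f|^2\nu_a$, $\sigma_{a_f}=|\hat f|^2\sigma_a$ and $\omega(a_f^*[H,a_f])=\int\mu\,|\hat f|^2\,d\nu_a$; feeding \eqref{eq:EEBineq} an $a_f$ with $\hat f$ concentrated near a point $\mu_0$ yields $\sigma_a\gtrsim e^{-\beta\mu}\nu_a$ on that window, while applying \eqref{eq:EEBineq} to $(a^*)_g$ with $\hat g$ concentrated near $-\mu_0$ and using the reflection identities $\nu_{a^*}=\check{\sigma}_a$, $\sigma_{a^*}=\check{\nu}_a$ yields the reverse bound $\sigma_a\lesssim e^{-\beta\mu}\nu_a$ there; shrinking the windows gives $d\sigma_a=e^{-\beta\mu}\,d\nu_a$ for every $a$, hence $\omega\in\KMS_\beta$.

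The step I expect to be the main obstacle is the extension of \eqref{eq:EEBineq} to the nonlocal observables $a_f$: the commutator $[H,a_f]$ has to be interpreted as $-i\delta(a_f)$, and one must verify that all three terms of \eqref{eq:EEBineq} — including the relative-entropy term $x\log(x/y)$, with the degenerate case $\omega(a_fa_f^*)=0$ again disposed of by Cauchy--Schwarz — pass to the limit along an approximation of $a_f$ by local observables converging also in the graph norm of $\delta$. This, together with the time-invariance step and the spectral reformulation of the KMS condition, is exactly where the operator-algebraic structure of finite-range interactions on $\ZZ^D$ is used ($\alpha_t$ a strongly continuous group with generator $\delta$ admitting $\cAloc$ as a core, and the identification $[H,a]=[\tilde{H}_{\Lambda},a]$ for $a\in\cA_{\Lambda}$); once these are in place the remaining argument is soft, combining only Bochner's theorem, Jensen's inequality, and the energy-localization squeeze above.
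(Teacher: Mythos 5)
Your argument is correct and is essentially the standard Araki--Sewell/Bratteli--Robinson proof that the paper invokes by citation (Theorem 5.3.15 of \cite{brbook}) rather than reproducing: the forward direction via the spectral measures $\mu_a,\nu_a$ with $d\nu_a = e^{\beta p}d\mu_a$ plus Jensen is exactly the machinery the paper redeploys in Appendix~\ref{sec:matrixEEBgeneral} for the matrix generalization, and your handling of the converse (core property of $\cAloc$, graph-norm approximation of the smeared elements $a_f$, lower semicontinuity of $x\log(x/y)$) matches the point made in the paper's remark following the theorem. No gaps to report.
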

\begin{remark}
The cited theorem requires the conditions to hold for all elements in the domain of the generator $\delta$ of the time-evolution operator $\alpha_t$.
On $\cAloc$ this operator takes values $\delta(a)=i[H,a]$.
It is, however, defined on a larger domain given by the closure of this operator, i.e., the commutator formula on $\cAloc$ is a \textit{core} for $\delta$.
By the definition of the closure of an operator, $a\in\cA$ is said to be in the domain of $\delta$, i.e., $a\in D(\delta)$ if there exists a sequence $a_i\in\cAloc$ such that
$a_i\to a$ \textit{and} $\delta(a_i)\to \delta(a)$.
Due to the latter, it is sufficient to impose the EEB inequalities on $\cAloc$ only, despite the lack of continuity of $\delta$. 
\end{remark}

The convexity of the constraints \eqref{eq:EEBineq} in $\omega$ follow from convexity of the function $(x,y)\mapsto x\log(x/y)$. Furthermore if $H$ is a local Hamiltonian, and $a$ is a local operator, then the constraints \eqref{eq:EEB} only involve local observables of $\omega$. Note that for $\beta = +\infty$ (zero temperature), the constraint \eqref{eq:EEBineq} reduces to the inequality \eqref{eq:groundstate} which characterizes the ground states of $H$.

\begin{remark}[Relation to thermodynamic limit]
\label{rem:gibbsthermodynamiclimit}
Using conditions \eqref{eq:EEB} one can show that any limit of finite Gibbs states $\exp(-\beta H_{\Lambda})/\tr \exp(-\beta H_{\Lambda})$ as $\Lambda\uparrow \Gamma$ will be a valid $\beta$-KMS state. Actually the same is true for any sequence of finite Gibbs states of $H_{\Lambda} + \Delta_{\Lambda^c}$ where $\Delta_{\Lambda^c}$ is an arbitrary term acting only on the boundary of $\Lambda$.

Similarly to the example for ground-states in Remark \ref{rem:gsthermodynamiclimit}, we demonstrate how different thermal states can arise from different choices of boundary conditions. Consider the 2D Ising model at some temperature $0 < T < T_c$ (where $T_c$ is the critical temperature), and let for any $\ell \in \NN$, $\rho_{\ell,\beta}$ be the finite Gibbs state on a region $\{-\ell,\dots,\ell\}^2$ with open boundary condition. Then by the spin-flip symmetry of the Ising Hamiltonian, we know that $\tr(Z_x \rho_{\ell,\beta}) = 0$ for any site $x$ in the region $\{-\ell,\dots,\ell\}^2$. Thus this means that for any limit $\omega_{\beta}$ of the finite Gibbs state with open boundary condition, we will have $\omega_{\beta}(Z_x) = 0$. It is known however that for the 2D Ising model below the critical temperature, there are two (extremal) infinite-volume Gibbs states $\omega_{\beta}^+$ and $\omega_{\beta}^-$ such that $\omega_{\beta}^{-}(Z_x) < 0 < \omega_{\beta}^+(Z_x)$, which can be reached by taking limits of finite Hamiltonians with fixed boundary condition, i.e., aligned spins on the boundary \cite{fvbook}.
\end{remark}

\subsection{Convex relaxations}

The EEB inequalities above can be used to obtain rigorous lower and upper bounds on the value of any observable on thermal states via convex optimization.
Let $(\Lambda_{\ell})_{\ell \in \NN}$ be an increasing sequence of subsets of $\sites$ such that the support of $O$ is contained in $\Lambda_0$, and such that $\Lambda_{\ell} \uparrow \sites$ as $\ell\to\infty$. To simplify the presentation, we will further assume that $\overline{\Lambda_{\ell-1}} \subset \Lambda_{\ell}$ for all $\ell$, where, as before, $\overline{\Lambda} = \Lambda \cup \bdex \Lambda$ (see \eqref{eq:bdexLambda}).
For example, if $\sites = \ZZ^D$ and $H$ is a nearest-neighbor Hamiltonian and $O$ is supported on site $0$, one can take $\Lambda_{\ell} = \{-\ell,\ldots,\ell\}^D$.
For any $\ell$, let $\cA_{\ell} \subset \cA$ be the subalgebra of observables supported on $\Lambda_{\ell}$. Consider the optimization problem 
\label{eq:opt1}
\begin{subequations}
\begin{align}
\underset{\tomega:\cA_{\ell}\to \CC}{\text{min/max}} \quad & \tomega(O)\\
\text{s.t.} \quad & \tomega(1) = 1 \label{eq:opt1-norm}\\
            & \tomega(a^* a) \geq 0 \quad \forall a \in \cA_{\ell} \label{eq:opt1-pos}\\
            & \tomega([H,a]) = 0 \; \forall a \in \cA_{\ell-1} \label{eq:opt1-stationary}\\
            & \tomega(a^* a) \log\left(\frac{\tomega(a^* a)}{\tomega(aa^*)}\right) \leq \beta \tomega(a^* [H,a]) \quad \forall a \in \cA_{\ell-1}.\label{eq:opt1-EEB}
\end{align}
\end{subequations}
The feasible set of this convex optimization problem is an \emph{outer relaxation} for the set of $\Lambda_{\ell}$-marginals of thermal equilibrium states in $\cG_{\beta}$\edit{, i.e., the optimization variable takes values in the set of states on $\cA_\ell$}. This relaxation is obtained by imposing the EEB inequalities on the finite-dimensional subspace $\cA_{\ell-1}$, in addition to the normalization and positivity constraints \eqref{eq:opt1-norm}, \eqref{eq:opt1-pos}.
Besides the linear equations \eqref{eq:opt1-norm} and \eqref{eq:opt1-stationary}, the constraint \eqref{eq:opt1-pos} can be encoded as the positivity of a Hermitian matrix of size $|\Lambda_{\ell}| \times |\Lambda_{\ell}|$.  However, the last set of constraints \eqref{eq:opt1-EEB} form an infinite set of scalar inequalities, and it is unclear how to formulate them in a compact way. \edit{This is due to the lack of linearity in the inequality. Imposing the inequality for a basis of $\cA_{\ell-1}$is not sufficient to conclude it on all elements of $\cA_{\ell-1}$, which is an infinite set.} Clearly one can sample certain observables $a_i \in \cA_{\ell-1}$ ($i=1,\ldots,m$) and impose the $m$ scalar inequalities, however, this poses the question of how to choose these $a_i$'s.

\paragraph{Matrix EEB inequality} We introduce the following matrix version of the EEB inequality which allows us to impose the EEB inequality for all observables $a$ in a finite-dimensional subspace. The matrix EEB inequality makes use of the matrix relative entropy function \cite{fujii1989relative,fujii1990extension} which is defined as
\begin{equation}
D_{op}(A\|B) = A^{1/2} \log(A^{1/2} B^{-1} A^{1/2}) A^{1/2}
\end{equation}
for any two positive semidefinite matrices $A,B \psd 0$ such that $A \ll B$ (i.e., $\im\,A \subset \im\,B$). The function $D_{op}$ is the operator perspective \cite{effros2009matrix,ebadian2011perspectives} of the negative logarithm function (which is operator convex) and satisfies the following properties, see \cite{fujii1990extension}:
\begin{itemize}
\item Homogeneity: $D_{op}(\lambda A \| \lambda B) = \lambda D_{op}(A\|B)$ for $\lambda > 0$
\item Operator convexity\footnote{For homogeneous functions, convexity is equivalent to subadditivity}: $D_{op}(\sum_k A_k \| \sum_k B_k) \nsd \sum_k D_{op}(A_k \|B_k)$
\item Transformer inequality: For any rectangular matrix $K$, 
\begin{equation}
\label{eq:transformer}
D_{op}(K^* A K \| K^* B K) \nsd K^* D_{op}(A\|B) K.
\end{equation}
\end{itemize}
Furthermore, the function $D_{op}$ is closed, in the sense that $\{(A,B,T) : A\ll B \text{ and } D_{op}(A\|B) \nsd T\}$ is a closed convex set \cite[Theorem B.1]{sc}.
Convex optimization problems involving the matrix relative entropy can be solved using interior-point methods \cite{sc} or via semidefinite programming approximations \cite{logapprox}.

\begin{theorem}[Matrix EEB inequality]
\label{thm:matrixEEB}
Let $H$ be a local Hamiltonian and $\omega:\cA\to \CC$ be a $\beta$-KMS state. Let $a_1,\ldots,a_m \in \cAloc$ and define the $m\times m$ matrices
\begin{equation}
\label{eq:matrixEEB-ABC}
\begin{aligned}
A_{ij} &= \omega(a_i^* a_j)\\
B_{ij} &= \omega(a_j a_i^*)\\
C_{ij} &= \omega(a_i^* [H,a_j]).
\end{aligned}
\end{equation}
Then $A,B,C$ are Hermitian, $A,B$ are positive semidefinite with $A\ll B$ and $D_{op}(A\|B) \nsd \beta C$.
\end{theorem}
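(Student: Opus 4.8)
The plan is to place the three matrices inside the Tomita--Takesaki modular structure of the KMS state $\omega$ and then extract the matrix inequality from a \emph{single} application of the transformer inequality~\eqref{eq:transformer}. The mechanism is this: once the dynamics is encoded by the modular operator $\Delta$, there is a linear ``coordinate'' map $V\colon\CC^m\to\cH_\omega$ (into the GNS space) for which $A=V^*V$, $B=V^*\Delta V$, and --- crucially --- $\beta C=V^*(-\log\Delta)V=V^*D_{op}(I\|\Delta)V$, i.e.\ $-\tfrac1\beta\log\Delta$ is the GNS Hamiltonian. The transformer inequality applied with $K=V$ and the pair $(I,\Delta)$ then yields $D_{op}(A\|B)\nsd\beta C$ in one line. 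I would treat $\beta\in(0,\infty)$ first; $\beta=0$ is trivial ($\omega$ is the tracial state, $\Delta=I$, so $A=B$ and $D_{op}(A\|A)=0=\beta C$), and $\beta=\infty$ is handled at the end.

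For the modular step, let $(\cH_\omega,\pi_\omega,\Omega)$ be the GNS triple. Since $\omega$ is $\beta$-KMS, $\Omega$ is cyclic and separating for $\cM=\pi_\omega(\cA)''$, so modular theory supplies the (in general unbounded and non-singular) modular operator $\Delta\psd0$, the modular conjugation $J$, and the closed Tomita operator $S=J\Delta^{1/2}$ with $S\,\pi_\omega(a)\Omega=\pi_\omega(a^*)\Omega$; moreover the KMS condition says the modular group is the rescaled time evolution, $\Delta^{it}\pi_\omega(a)\Omega=\pi_\omega(\alpha_{-\beta t}(a))\Omega$ for all $a\in\cA$, $t\in\RR$ (see \cite[Section~5.3]{brbook}). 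Putting $\xi_j=\pi_\omega(a_j)\Omega$ and $Ve_j=\xi_j$, I would then read off: $A_{ij}=\langle\xi_i,\xi_j\rangle$, so $A=V^*V\psd0$; $B_{ij}=\omega(a_ja_i^*)=\langle\pi_\omega(a_j^*)\Omega,\pi_\omega(a_i^*)\Omega\rangle=\langle S\xi_j,S\xi_i\rangle=\langle\xi_i,\Delta\xi_j\rangle$ (using anti-unitarity of $J$; here $\xi_j\in\mathrm{dom}(S)=\mathrm{dom}(\Delta^{1/2})$), so $B=V^*\Delta V\psd0$; and, since $a_j\in\cAloc$ lies in the domain of the generator $\delta$ of $\alpha_t$ with $\delta(a_j)=i[H,a_j]$, the curve $t\mapsto\Delta^{it}\xi_j=\pi_\omega(\alpha_{-\beta t}(a_j))\Omega$ is norm-differentiable at $0$, forcing $\xi_j\in\mathrm{dom}(\log\Delta)$ with derivative $i(\log\Delta)\xi_j=-i\beta\,\pi_\omega([H,a_j])\Omega$; hence $\pi_\omega([H,a_j])\Omega=-\tfrac1\beta(\log\Delta)\xi_j$ and $\beta C_{ij}=\beta\langle\xi_i,\pi_\omega([H,a_j])\Omega\rangle=-\langle\xi_i,(\log\Delta)\xi_j\rangle$, i.e.\ $\beta C=V^*(-\log\Delta)V$. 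In particular $A,B,C$ are Hermitian; and since $\Delta$ is non-singular, $Bc=0\Rightarrow\Delta^{1/2}Vc=0\Rightarrow Vc=0\Rightarrow Ac=0$, so $\ker B\subseteq\ker A$ and $A\ll B$. The concluding step is then one line: $-\log\Delta=D_{op}(I\|\Delta)$ is the operator perspective of $-\log$ at $(I,\Delta)$, so \eqref{eq:transformer} with $K=V$ gives
\[
D_{op}(A\|B)=D_{op}(V^*IV\,\|\,V^*\Delta V)\;\nsd\;V^*D_{op}(I\|\Delta)V=\beta C .
\]

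I expect the main obstacle to be the analytic bookkeeping around the \emph{unbounded} modular operator: $-\log\Delta$ is unbounded and sign-indefinite, so both the identity $\beta C=V^*(-\log\Delta)V$ and the use of \eqref{eq:transformer} (stated for matrices) have to be justified. For the identity, one needs the domain memberships $\xi_j\in\mathrm{dom}(\Delta^{1/2})\cap\mathrm{dom}(\log\Delta)$, which come exactly from the $C^1$-regularity of the modular curve, i.e.\ from $\cAloc\subseteq\mathrm{dom}(\delta)$ together with Stone's theorem. For the transformer step I would spectrally truncate: with $f_n(t)=\min(\max(t,1/n),n)$ and $\Delta_n=f_n(\Delta)$, so $\tfrac1n I\nsd\Delta_n\nsd nI$, the transformer inequality (which persists for bounded positive operators on a Hilbert space, after a further finite-rank approximation if one wishes to stay literally within the matrix statement) gives $D_{op}(A\|B_n)\nsd V^*(-\log\Delta_n)V=:\beta C_n$ with $B_n=V^*\Delta_nV$; by dominated convergence (dominating functions $t\mapsto t+1$ and $t\mapsto|\log t|$, integrable against each spectral measure $\langle\xi_j,E(\cdot)\xi_j\rangle$ precisely by the above domain conditions) $B_n\to B$ and $C_n\to C$ entrywise, and the closedness of the epigraph of the matrix relative entropy \cite[Theorem~B.1]{sc} upgrades $D_{op}(A\|B_n)\nsd\beta C_n$ to $A\ll B$ and $D_{op}(A\|B)\nsd\beta C$. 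Finally, for $\beta=\infty$ I would replace $\Delta$ by the positive generator $\hat H\psd0$ of $\alpha_t$ in the ground-state GNS representation, with $\hat H\Omega=0$: then $\pi_\omega([H,a_j])\Omega=\hat H\xi_j$, so $C=V^*\hat H V\psd0$, which is the ground-state form of the EEB inequality.
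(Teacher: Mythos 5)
Your proposal is correct, and it reaches the conclusion by a genuinely different route from the paper. The paper's argument (finite-dimensional case in the main text, general case in Appendix~A) never invokes modular theory explicitly: it takes the positive Radon measures $\mu_a,\nu_a$ from the Araki--Sewell/Bratteli--Robinson proof of the scalar EEB inequalities, upgrades them to a positive matrix-valued measure $\mu$ with $d\nu(p)=e^{p\beta}d\mu(p)$, $\mu(1)=A$, $\nu(1)=B$, $\mu(-p)=C$, and then applies subadditivity (operator Jensen) of $D_{op}$ under the integral, $D_{op}(\int d\mu\,\|\,\int e^{p\beta}d\mu)\nsd\int D_{op}(\tilde A(p)\|e^{p\beta}\tilde A(p))\,d\gamma(p)=\beta C$. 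Your proof instead identifies $A,B,\beta C$ as the compressions $V^*V$, $V^*\Delta V$, $V^*(-\log\Delta)V$ in the GNS representation and applies the transformer inequality once to the pair $(I,\Delta)$; this is really the same spectral structure ($\mu$ is, up to sign conventions, the joint spectral measure of $\log\Delta$ on the vectors $\xi_j$), but your packaging makes the finite-dimensional proof in the main text and the general proof literally the same computation, and it derives $A\ll B$ and Hermiticity of $C$ for free from non-singularity and self-adjointness of $\Delta$ rather than from the scalar EEB inequalities. The price is exactly the analytic bookkeeping you identify: the domain memberships $\xi_j\in\mathrm{dom}(\Delta^{1/2})\cap\mathrm{dom}(\log\Delta)$ (both correctly justified, the first since $\pi_\omega(\cA)\Omega\subset\mathrm{dom}(S)$, the second via Stone's theorem and $\cAloc\subset\mathrm{dom}(\delta)$), and the extension of the transformer inequality to the unbounded $\Delta$. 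Your truncation-plus-closedness argument handles the latter soundly; the only step I would spell out more carefully is the claim that the matrix transformer inequality extends to bounded positive operators $\frac1nI\nsd\Delta_n\nsd nI$ on the infinite-dimensional GNS space --- a naive finite-rank compression does not commute with $-\log$, so one should either cite the operator version of the transformer inequality for noncommutative perspectives or reduce to the resolvent integral representation $-\log x=\int_0^\infty\bigl(\tfrac{1}{x+s}-\tfrac{1}{1+s}\bigr)ds$, for which the inequality $K^*X(Y+sX)^{-1}XK\psd(K^*XK)(K^*YK+sK^*XK)^{-1}(K^*XK)$ is elementary. This is a presentational point, not a gap.
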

\begin{proof}
It is clear that $A,B$ are Hermitian and that they are positive semidefinite. That $A \ll B$ follows from the fact that for a KMS state $\omega(bb^*) = 0 \implies \omega(b^* b) = 0$ for any $b \in \cA$. (This can be seen e.g., from \eqref{eq:EEBineq}).
 To see that $C$ is Hermitian, note that since $\omega$ is a KMS state we have $\omega([H,b]) = 0$ for any $b \in \cAloc$ (see \eqref{eq:EEBcomm}). Then
\begin{equation}
C_{ij} = \omega(a_i^* [H,a_j]) = \omega(a_i^*(Ha_j - a_j H)) = \omega(a_i^* H a_j - H a_i^* a_j) = \omega([a_i^*,H] a_j) = \overline{\omega(a_j^* [H,a_i])} = \overline{C_{ji}}.
\end{equation}
We now prove the inequality $D_{op}(A\|B) \nsd \beta C$. We present the proof here in the finite-dimensional case. We treat the general case in Appendix \ref{sec:matrixEEBgeneral}. We thus assume that $\cA$ is a finite-dimensional matrix algebra. Let $L:\cA\to \cA$ be the commutator operator with the Hamiltonian $H$, i.e., $L(a) = [H,a]$.
The map $L$ is self-adjoint with respect to the Hilbert-Schmidt inner product. Observe that for any $a \in \cA$, $e^{-\beta L}(a) = e^{-\beta H} a e^{\beta H}$ (this can be easily seen e.g., by checking that $a(\beta) := e^{-\beta H} a e^{\beta H}$ is the solution of the ODE $a'(\beta) = -L(a(\beta))$ with $a(0) = a$.)
Let $L = \sum_{k} \lambda_k P_k$ be a spectral decomposition of $L$. By the KMS condition we have
\begin{equation}
B_{ij} = \omega(a_j a_i^*) = \omega(a_i^* e^{-\beta L}(a_j)) = \sum_{k} e^{-\beta \lambda_k} \omega(a_i^* P_k(a_j)) = \sum_{k} e^{-\beta \lambda_k} A^{(k)}_{ij}
\end{equation}
where we let $A^{(k)}_{ij} = \omega(a_i^* P_k(a_j))$. Since $\sum_k P_k = I$ and $\sum_k \lambda_k P_k = L$, we have
\begin{equation}
\sum_{k} A^{(k)} = A \quad \text{ and } \quad \sum_k \lambda_k A^{(k)} = C.
\end{equation}
Then we have
\begin{equation}
\begin{aligned}
D_{op}(A\|B) = D_{op}\left(\sum_k A^{(k)} \| \sum_k e^{-\beta \lambda_k} A^{(k)}\right)
             \nsd \sum_k D_{op}(A^{(k)} \| e^{-\beta \lambda_k} A^{(k)})
             = \sum_k \beta \lambda_k A^{(k)} = \beta C
\end{aligned}
\end{equation}
as desired.
\end{proof}

\paragraph{A convex relaxation using the matrix relative entropy} We now consider the modified convex optimization problem
\begin{subequations}
\label{eq:opt2}
\begin{align}
\underset{\tomega:\cA_{\ell}\to \CC}{\text{min/max}} \quad & \tomega(O)\\
\text{s.t.} \quad & \tomega(1) = 1 \label{eq:opt2-norm}\\
            & \tomega(a^* a) \geq 0 \quad \forall a \in \cA_{\ell} \label{eq:opt2-pos}\\
            & \tomega([H,a]) = 0 \; \forall a \in \cA_{\ell-1} \label{eq:opt2-comm}\\
            & D_{op}\left( \, [\tomega(a_i^* a_j)]_{ij} \, , \, [\tomega(a_j a_i^*)]_{ij} \, \right) \; \nsd \; \beta [\tomega(a_i^*[H,a_j])]_{ij}. \label{eq:opt2-EEB}
\end{align}
\end{subequations}
where in the last constraint, $\{a_i\}$ is a basis of $\cA_{\ell-1}$. One can show, using \eqref{eq:transformer}, that the infinite number of scalar constraints \eqref{eq:opt1-EEB} are implied by the finite positive semidefinite constraint \eqref{eq:opt2-EEB}. We prove this in the next proposition.

\begin{proposition}
Consider the optimization problem \eqref{eq:opt1} and assume $\{a_i\}$ ($i=1,\ldots,m$) is a basis of $\cA_{\ell-1}$. Then the constraints \eqref{eq:opt1-EEB} are all implied by the matrix inequality \eqref{eq:opt2-EEB}.
\end{proposition}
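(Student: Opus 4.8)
The plan is to show that every scalar inequality in \eqref{eq:opt1-EEB} is obtained from the single matrix inequality \eqref{eq:opt2-EEB} by conjugating with an appropriate vector, which is precisely what the transformer inequality \eqref{eq:transformer} is built for.

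First I would fix an arbitrary $a \in \cA_{\ell-1}$ and expand it in the given basis, $a = \sum_{i=1}^m c_i a_i$ with $c = (c_1,\dots,c_m)^\top \in \CC^m$, regarded as an $m\times 1$ matrix $K := c$. Writing $A = [\tomega(a_i^* a_j)]_{ij}$, $B = [\tomega(a_j a_i^*)]_{ij}$, $C = [\tomega(a_i^*[H,a_j])]_{ij}$ for the three matrices appearing in \eqref{eq:opt2-EEB}, a direct bilinearity computation gives
\[
K^* A K = \tomega(a^* a), \qquad K^* B K = \tomega(a a^*), \qquad K^* C K = \tomega(a^*[H,a]),
\]
where the transpose built into the definition of $B$ is what makes the middle identity come out right. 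I would also record that $C$ is Hermitian (hence $K^* C K$ is real): this follows from the stationarity constraint \eqref{eq:opt1-stationary} via the same short computation as in the proof of Theorem~\ref{thm:matrixEEB}.

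Next I would apply the transformer inequality \eqref{eq:transformer} with this $K$ to the feasibility constraint $D_{op}(A\|B) \nsd \beta C$, obtaining
\[
D_{op}\!\left(K^* A K \,\big\|\, K^* B K\right) \;\nsd\; K^* D_{op}(A\|B)\, K \;\nsd\; \beta\, K^* C K .
\]
The left-hand side is the matrix relative entropy of two $1\times 1$ matrices, namely $x := \tomega(a^*a) \ge 0$ and $y := \tomega(aa^*) \ge 0$, and by the definition of $D_{op}$ this scalar equals $x\log(x/y)$ with the stated conventions. Hence $\tomega(a^*a)\log\!\big(\tomega(a^*a)/\tomega(aa^*)\big) \le \beta\,\tomega(a^*[H,a])$, which is \eqref{eq:opt1-EEB} for this $a$; since $a$ was arbitrary in $\cA_{\ell-1}$, the proof is complete.

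The only point that needs care — and the closest thing to an obstacle — is checking that everything in sight is well-defined, i.e., that the relevant ``$\ll$'' conditions hold. The constraint \eqref{eq:opt2-EEB} already carries $A \ll B$ in it (otherwise $D_{op}(A\|B)$ is undefined), and from $A\ll B$ one gets $A \nsd \lambda B$ for some $\lambda>0$, which after conjugation gives $K^*AK \nsd \lambda K^*BK$ and in particular $x \ll y$; equivalently, if $y = K^*BK = 0$ then $BK = 0$ (as $B \psd 0$), so $AK = 0$ since $\ker B \subseteq \ker A$, whence $x = 0$. Thus the scalar $D_{op}(x\|y)$ is exactly the one appearing in \eqref{eq:opt1-EEB}, and no degenerate case is lost. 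Beyond this bookkeeping the statement is essentially a repackaging of the transformer inequality, so I do not anticipate further difficulties.
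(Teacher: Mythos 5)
Your proof is correct and follows essentially the same route as the paper: expand $a$ in the basis, verify that conjugation by the coefficient vector turns $A,B,C$ into the three scalars $\tomega(a^*a)$, $\tomega(aa^*)$, $\tomega(a^*[H,a])$, and apply the transformer inequality \eqref{eq:transformer}. The extra bookkeeping you add about the $1\times 1$ case of ``$\ll$'' is a valid (if optional) refinement of the paper's argument.
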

\begin{proof}
Let $a$ be an arbitrary element of $\cA_{\ell-1}$, and let $a = \sum_{i} x_i a_i$, with $x_i \in \CC$, be its decomposition in the basis $\{a_i\}$. Also let $A,B,C$ be the matrices defined in \eqref{eq:matrixEEB-ABC}, so that the inequality \eqref{eq:opt2-EEB} can be written as $D_{op}(A\|B) \nsd \beta C$. If we let $x=(x_1,\ldots,x_m) \in \CC^m$, then note that
\begin{equation}
\begin{aligned}
x^* A x &= \sum_{ij} \overline{x_i} x_j \omega(a_i^* a_j) = \omega(a^* a)\\
x^* B x &= \sum_{ij} \overline{x_i} x_j \omega(a_j a_i^*) = \omega(aa^*)\\
x^* C x &= \sum_{ij} \overline{x_i} x_j \omega(a_i^* [H, a_j]) = \omega(a^* [H,a]).
\end{aligned}
\end{equation}
It follows from \eqref{eq:transformer} that
\begin{equation}
\omega(a^* a) \log\left(\frac{\omega(a^* a)}{\omega(aa^*)}\right) = D_{op}(x^* A x \| x^* B x) \leq x^* D_{op}(A\|B) x \leq \beta x^* C x = \beta \omega(a^* [H,a])
\end{equation}
as desired.
\end{proof}

\subsection{Convergence}

Let $(\Pmin_{\ell})$ and $(\Pmax_{\ell})$ be respectively the minimization and maximization problems \eqref{eq:opt2}, and let $\pmin_{\ell} \leq \pmax_{\ell}$ be their optimal values. The next theorem shows the asymptotic convergence of  $\pmin_{\ell}$ and $\pmax_{\ell}$ to $\OminNTI_{\beta}$ and $\OmaxNTI_{\beta}$ respectively.

\begin{theorem}
\label{thm:main1}
Let $H$ be a local Hamiltonian, $O \in \cAloc$ a local observable, and $\beta \in [0,+\infty]$. 
Define
\begin{equation}
[\OminNTI_{\beta}, \OmaxNTI_{\beta}] = \{ \omega(O) : \omega \in \KMS_{\beta} \}.
\end{equation}
Then $\pmin_{\ell} \uparrow \OminNTI_{\beta}$ and $\pmax_{\ell} \downarrow \OmaxNTI_{\beta}$ as $\ell \to \infty$.
\end{theorem}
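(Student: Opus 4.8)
The plan is to prove the outer-bound inequalities $\pmin_{\ell} \le \Omin_{\beta} \le \Omax_{\beta} \le \pmax_{\ell}$ together with monotonicity, and then to obtain convergence by extracting an honest $\beta$-KMS state as a limit point of optimal solutions. For the outer bounds, I would observe that for any $\omega \in \KMS_{\beta}$ (and $\KMS_{\beta}\neq\emptyset$) the restriction $\omega|_{\cA_{\ell}}$ is feasible for \eqref{eq:opt2}: \eqref{eq:opt2-norm}, \eqref{eq:opt2-pos}, \eqref{eq:opt2-TI} are immediate from \eqref{eq:Kbeta}, \eqref{eq:opt2-comm} is \eqref{eq:EEBcomm}, and \eqref{eq:opt2-EEB} is precisely Theorem~\ref{thm:matrixEEB} (for $\beta=+\infty$ the constraint reads $[\omega(a_i^*[H,a_j])]_{ij}\succeq 0$, i.e.\ \eqref{eq:groundstate}). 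As the objective is linear this gives $\pmin_{\ell}\le\omega(O)\le\pmax_{\ell}$ for every such $\omega$. For monotonicity I would check that the restriction to $\cA_{\ell}$ of any point feasible at level $\ell+1$ is feasible at level $\ell$: the linear and PSD constraints restrict trivially, and the matrix EEB constraint at level $\ell$ is obtained from the one at level $\ell+1$ by compressing with the inclusion of the level-$(\ell-1)$ sub-basis, using the transformer inequality \eqref{eq:transformer} exactly as in the Proposition. For $\ell$ large enough that $O\in\cA_{\ell-1}$ this yields $\pmin_{\ell}\le\pmin_{\ell+1}$ and $\pmax_{\ell}\ge\pmax_{\ell+1}$; since every feasible $\tomega$ is a state and hence $|\tomega(O)|\le\|O\|$, both sequences converge monotonically, say to $L^{\min}\le\Omin_{\beta}$ and $L^{\max}\ge\Omax_{\beta}$.

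It then remains to show $L^{\min}=\Omin_{\beta}$ (the argument for $L^{\max}$ being symmetric). The feasible set of \eqref{eq:opt2} is a bounded subset of a finite-dimensional space which is closed (the linear and PSD constraints are closed, and closedness of the matrix-relative-entropy constraint is the property recalled before Theorem~\ref{thm:matrixEEB}), hence compact, so the minimum is attained by a state $\tomega^{(\ell)}$ on $\cA_{\ell}$ with $\tomega^{(\ell)}(O)=\pmin_{\ell}$. Using $\cAloc=\bigcup_{\ell}\cA_{\ell}$ and $|\tomega^{(\ell)}(b)|\le\|b\|$, a diagonal extraction over a countable basis of $\cAloc$ produces a subsequence $(\ell_k)$ along which $\tomega^{(\ell_k)}(a)$ converges for every $a\in\cAloc$; the limit $\omega(a):=\lim_k\tomega^{(\ell_k)}(a)$ is linear, bounded, and satisfies $\omega(1)=1$ and $\omega(a^*a)\ge 0$, so it extends uniquely to a state on $\cA=\overline{\cAloc}$.

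Next I would verify that $\omega$ is a translation-invariant $\beta$-KMS state. Fix $a\in\cAloc$; for $k$ large $a\in\cA_{\ell_k-1}$, so \eqref{eq:opt2-TI} and \eqref{eq:opt2-comm} give $\tomega^{(\ell_k)}(\tau_x a)=\tomega^{(\ell_k)}(a)$ for $x\in\{-1,0,1\}^D$ and $\tomega^{(\ell_k)}([H,a])=0$; letting $k\to\infty$ and using that $\pm e_1,\dots,\pm e_D$ generate $\ZZ^D$ yields translation invariance of $\omega$ and \eqref{eq:EEBcomm}. Moreover, by the Proposition, for $k$ large \eqref{eq:opt2-EEB} for $\tomega^{(\ell_k)}$ implies $\tomega^{(\ell_k)}(a^*a)\log(\tomega^{(\ell_k)}(a^*a)/\tomega^{(\ell_k)}(aa^*))\le\beta\,\tomega^{(\ell_k)}(a^*[H,a])$, and this passes to the limit by lower semicontinuity (closedness) of $\{(x,y,z):x\log(x/y)\le z\}$ — the scalar ($m=1$) case of the closedness property of $D_{op}$ — with the conventions $0\log(0/y)=0$ and $x\log(x/0)=+\infty$ (for $\beta=+\infty$ one instead passes $\tomega^{(\ell_k)}(a^*[H,a])\ge 0$ to the limit, recovering \eqref{eq:groundstate}). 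Thus $\omega$ satisfies \eqref{eq:EEB} for all $a\in\cAloc$, so $\omega\in\KMS_{\beta}$ by Theorem~\ref{thm:EEB}. Taking $\ell_0$ with $O\in\cA_{\ell_0}$, we get $\omega(O)=\lim_k\tomega^{(\ell_k)}(O)=\lim_k\pmin_{\ell_k}=L^{\min}$, while $\omega\in\KMS_{\beta}$ forces $\omega(O)\ge\Omin_{\beta}$; combined with $L^{\min}\le\Omin_{\beta}$ this gives $\pmin_{\ell}\uparrow\Omin_{\beta}$, and symmetrically $\pmax_{\ell}\downarrow\Omax_{\beta}$.

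I expect the main obstacle to be the step showing the limit functional is genuinely a KMS state: the routine parts (feasibility, compactness, extension to a state) assemble quickly, but passing limits through the nonlinear EEB inequality requires the lower semicontinuity/closedness of the relative-entropy function together with its boundary conventions, plus a separate (easier) treatment of the $\beta=+\infty$ case, and one must be careful that the finite truncations are imposing the scalar EEB inequality only for those $a$ that eventually lie in $\cA_{\ell-1}$.
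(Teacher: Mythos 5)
Your proof is correct and follows essentially the same route as the paper: restriction of KMS states gives the outer bounds, the transformer inequality gives monotonicity, and a limit point of (near-)optimal truncated solutions is upgraded to a genuine translation-invariant $\beta$-KMS state via the EEB characterization. The only cosmetic differences are that you establish attainment of the optimum by compactness of the feasible set (the paper instead works with $1/\ell$-optimal solutions) and extract the limit by a diagonal argument on $\cAloc$ rather than extending each $\tomega_\ell$ to a state on $\cA$ and invoking weak-* compactness of the state space; your explicit appeal to lower semicontinuity of $(x,y)\mapsto x\log(x/y)$ when passing to the limit is a point the paper leaves implicit.
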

\begin{proof}
Note that $\KMS_{\beta}$ is convex and weak-* compact \cite[Theorem 5.3.30]{brbook} and so $\{ \omega(O) : \omega \in \KMS_{\beta}\}$ is a closed interval.

If $\omega \in \KMS_{\beta}$, then for any $\ell$, its restriction to $\cA_{\ell}$ is feasible for the optimization problems \eqref{eq:opt2}. Thus this means that $\pmin_{\ell} \leq \OminNTI_{\beta} \leq \OmaxNTI_{\beta} \leq \pmax_{\ell}$ for any $\ell$. Furthermore, observe that $\pmin_{\ell}$ is monotonic nondecreasing, and $\pmax_{\ell}$ is monotone nonincreasing. Indeed assume $\tomega_{\ell}:\cA_{\ell}\to \CC$ is feasible for \eqref{eq:opt2} and consider its restriction to $\cA_k$ for $k < \ell$. We want to show that it is feasible for the level $k$ of the relaxation \eqref{eq:opt2}. It is immediate that the restriction satisfies the constraints \eqref{eq:opt2-norm}-\eqref{eq:opt2-comm} at the level $k$ of the relaxation. To prove \eqref{eq:opt2-EEB} we use the transformer inequality \eqref{eq:transformer}. Let $\{b_s\}$ be a basis of $\cA_{k-1}$ and note that we can express each $b_s$ as a linear combination of the $a_i \in \cA_{\ell-1}$, i.e., $b_s = \sum_{i} K_{is} a_i$ for some $K_{is} \in \CC$. As such we have
\begin{equation}
\tomega(b_s^* b_t) = \sum_{ij} \overline{K_{is}} K_{jt} \tomega(a_i^* a_j)
\end{equation}
so that we can write $[\tomega(b_s^* b_t)]_{s,t} = K^* [\tomega(a_i^* a_j)]_{i,j} K$
and similarly for $[\tomega(b_t b_s^*)]_{s,t}$. Hence we get
\begin{equation}
\begin{aligned}
D_{op}([\tomega(b_s^* b_t)]_{s,t} \; , \; [\tomega(b_t b_s^*)]_{s,t}) &= 
D_{op}(K^* [\tomega(a_i^* a_j)]_{i,j}  K \; , K^* [\tomega(a_j a_i^*)]_{i,j} K)\\
&\nsd K^* D_{op}([\tomega(a_i^* a_j)]_{i,j} \; , \; [\tomega(a_j a_i^*)]_{i,j}) K\\
&\nsd \beta K^* [\tomega(a_i^*[H,a_j])]_{i,j} K\\
&= \beta [\tomega(b_s^*[H,b_t])]_{s,t},
\end{aligned}
\end{equation}
as desired.

Since $(\pmin_{\ell})$ and $(\pmax_{\ell})$ are monotonic and bounded
they must have limits $\pmin_{\infty}:=\lim \pmin_{\ell}$ and $\pmax_{\infty}:=\lim \pmax_{\ell}$ respectively. We want to show that $\pmin_{\infty}=\OminNTI_{\beta}$ and $\pmax_{\infty}=\OmaxNTI_{\beta}$.

For each $\ell$, let $\tomega_{\ell}:\cA_{\ell} \to \CC$ be a feasible solution to \eqref{eq:opt2} such that $\pmin_{\ell} \leq \tomega_{\ell}(O)\leq \pmin_{\ell} + 1/\ell$ (if the minimum value of the optimization problem is attained, we simply take $\tomega_{\ell}$ the optimal solution which satisfies $\tomega_{\ell}(O) = \pmin_{\ell}$). We can extend $\tomega_{\ell}$ to a state on the full $C^*$-algebra $\cA$ \cite[Proposition 2.3.24]{brbook1}. We now have a sequence of states $\tomega_{\ell}$ on $\cA$. Since the set of states on $\cA$ is weak-* compact \cite[Theorem 2.3.15]{brbook1}, we can extract from $\tomega_{\ell}$ a convergent subnet $\tomega'_{i}$ defined on the directed set $\mathcal{I}$ via a monotone final function $h:\mathcal I\to \mathbb N$, $\tomega'_i:=\tomega_{h(i)}$. We denote the limit of the subnet by $\tomega:=\lim_i \tomega'_i$, which is a state on $\cA$. 

We now show that $\tomega$ is a $\beta$-KMS state by showing that the conditions \eqref{eq:EEB} hold.
Let $a \in \cAloc$. Since $a$ is a local observable we have for some $\ell'$ that $a \in \cA_\ell$ for all $\ell\ge\ell'$. Hence this implies that
\begin{equation}
\begin{aligned}\label{eq:EEBlimitCond}
& \tomega_{\ell}([H,a]) = 0\\
& \tomega_{\ell}(a^* a) \log\left(\frac{\tomega_{\ell}(a^* a)}{\tomega_{\ell}(aa^*)}\right) \leq \beta\tomega_{\ell}(a^*[H,a])
\end{aligned}
\end{equation}
for all $\ell\ge\ell'$. Since $h$ is final and monotone, there exists $i'\in\mathcal I$, such that $h(i)\ge\ell'$ $\forall i\ge i'$.
Let $S$ be the complement of the set of states on which~\eqref{eq:EEBlimitCond} holds, which is open by weak-* continuity of $\omega\mapsto\omega(O)$ for any $O\in\cA$.
Assume $\tomega\in S$. Then there exists $k\in\mathcal{I}$ such that $\forall i\ge k$ we have $\tomega_i'\in S$. However, since $\mathcal I$ is a directed set, there exists $i\in\mathcal I$ with $i\ge i'$ and $i\ge k$ and so $\tomega_{h(i)}\not\in S$ and $\tomega_{h(i)}\in S$, which is a contradiction.
Therefore we conclude that $\tomega$ fulfills \eqref{eq:EEBlimitCond} or likewise \eqref{eq:EEB} and is a $\beta$-KMS state.

Finally note that
\begin{equation}
\tomega(O) = \lim_{i} \tomega'_i(O)=\lim_{i} \tomega_{h(i)}(O) \leq \lim_{i} \pmin_{h(i)}+1/h(i) = \pmin_{\infty}.
\end{equation}
Here, $\pmin_{h(i)}$ and $1/h(i)$ are subnets of the convergent sequences $\pmin_\ell$ and $1/\ell$ and thereby converge.
Since $\pmin_{\infty} \leq \OminNTI_{\beta} \leq \tomega(O)$ we necessarily get equality $\pmin_{\infty} = \OminNTI_{\beta}$. The same argument can be used to show that $\pmax_{\infty} = \OmaxNTI_{\beta}$.
\end{proof}

\paragraph{Translation-invariant Hamiltonians} We now prove \mainref{Theorem~\ref{thm:mainTI}}{the main theorem, which we restate here}.
For translation-invariant Hamiltonians, one can adapt the hierarchy \eqref{eq:opt2} so that the resulting sequences $(\pminTI_{\ell})$ and $(\pmaxTI_{\ell})$ converge to $\Omin$ and $\Omax$ respectively. It suffices to add the following translation-invariance linear constraint in the convex optimization problem:
\begin{equation}
    \label{eq:opt2-TI-new}
    \tomega(\tau_x(a)) = \tomega(a) \quad \forall a \in \cA_{\ell} \; \forall x \in \ZZ^D \text{ s.t. } \tau_x(a) \in \cA_{\ell}.
\end{equation}
To prove convergence, one simply needs to show (using the same notations as in the proof of Theorem \ref{thm:main1}) that $\tomega = \text{weak*-lim} \, \tomega_{n(\ell)}$ is translation-invariant. To do this, let $a \in \cAloc$ and $x \in \ZZ^D$. Since $a$ is a local operator, we know that $a,\tau_x(a) \in \cA_{n(\ell)}$ for all large enough $\ell$. This means that
\begin{equation}
\tomega(\tau_x(a)) = \lim_{\ell} \tomega_{n(\ell)}(\tau_x(a)) = \lim_{\ell} \tomega_{n(\ell)}(a) = \tomega(a)
\end{equation}
where the middle equality follows from the translation-invariant constraint \eqref{eq:opt2-TI-new} imposed in the optimization problem. This is true for all $a \in \cAloc$ and $x \in \ZZ^D$, and so $\tomega$ is translation-invariant.

\subsection{Decidability}
It has been shown in \cite{bausch2021uncomputability} that computing the phase diagram of a 2D nearest neighbour translation-invariant Hamiltonian in the case $\beta=\infty$ is undecidable.
This is shown for both a definition of phase in terms of gapped and gapless Hamiltonians as well as in terms of an order parameter.
It is interesting to compare the latter with our result.
More precisely, the construction of the authors gives a Hamiltonian (described by a nearest neighbour interaction $h(n)$) for every input $n$ to a universal Turing machine and a local observable $O$ such that
\begin{itemize}
\item If the Turing machine halts on input $n$ then for any sequence of ground states $\ket{\psi_L}$ of the Hamiltonian on an $L\times L$ square with open boundary conditions $\lim_{L\to\infty} \bra{\psi_L}O\ket{\psi_L}=0$.
\item If the Turing machine does not halt on input $n$, then for any sequence of ground states $\ket{\psi_L}$ of the Hamiltonian on an $L\times L$ square with open boundary conditions $\lim_{L\to\infty} \bra{\psi_L}O\ket{\psi_L}=1$.
\end{itemize}
\edit{It is interesting to compare this to our algorithm by asking what the output of our algorithm would be given the construction above as its input.
We leave a full resolution of this question to future work.
However, in light of the proven undecidability we can rule out that there exists a constant $\eps<1/2$ such that the interval provided by our algorithm is below $\eps$ for all Hamiltonians arising from halting instances and above $1-\eps$ for all non-halting ones.\\ \indent
We can however say the following to partially explain how the proof in \cite{bausch2021uncomputability} would not extend to arbitrary boundary conditions.}
The limit values in this construction crucially depend on the choice of boundary conditions.
While the boundary condition is not mentioned explicitly as it is always kept open, in fact a technique from \cite{gottesman2010quantum} is used to encode a boundary term corresponding to an energy shift of a Hamiltonian into the translation-invariant interaction term.
Before this shift the construction yields a nonnegative ground-state energy in the nonhalting case and a negative one in the halting case.
Adding this energy shift and adding a subspace $\{\ket0\}$ with a trivial product state with zero energy to the Hamiltonian, the ground state in the nonhalting case will be given by this product state.
The observable $O=\ketbra{0}{0}$ is then given by a projector onto this subspace.

The definition of ground-states in our work is different and in fact the set of equilibrium states $\KMS_{\infty}$ includes the limits of ground-states of Hamiltonians for \textit{any} boundary condition, see Remark \ref{rem:gsthermodynamiclimit}. 
In this version, as opposed to the one above, our hierarchies do prove the decidability of the equilibrium observable problem as given in  \mainref{Definition~\ref{def:eop}}{the following definition}.
\edit{This means that the problem solution is defined over the set of all boundary conditions, not that the problem can be decided for every choice of boundary condition.}

\begin{theorem}
    The Equilibrium Observable Problem is decidable.
\end{theorem}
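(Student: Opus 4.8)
The plan is to convert the two converging bounds of Theorem~\ref{thm:main1} into a decision procedure, letting the promise guarantee that one of the two bounds eventually becomes conclusive. Concretely, given an instance $(d,h,O,a,b,\beta)$ with rational data ($\beta$ rational or $\beta=+\infty$), I would run the following loop over $\ell=1,2,\dots$. At step $\ell$, form the convex programs $(\Pmin_\ell)$ and $(\Pmax_\ell)$ of \eqref{eq:opt2} built from $h$ (coarse-grained if needed to nearest-neighbour form), the observable $O$, and inverse temperature $\beta$. Their data are rational, their feasible set is nonempty (restrictions of $\beta$-KMS states are feasible, exactly as in the proof of Theorem~\ref{thm:main1}) and compact (states on $\cA_\ell$ form a norm-bounded set, and the matrix relative entropy cone is closed by \cite[Theorem~B.1]{sc}), so the optimal values $\pmin_\ell\le\pmax_\ell$ are finite. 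Using a rigorous solver — the semidefinite outer/inner approximations of the matrix relative entropy cone from \cite{logapprox}, or a self-concordant barrier \cite{sc} together with a duality-gap stopping rule — compute from the instance and $\ell$ rational numbers $\underline p_\ell^-\le\pmin_\ell$ and $\overline p_\ell^+\ge\pmax_\ell$ with $\pmin_\ell-\underline p_\ell^-\le 2^{-\ell}$ and $\overline p_\ell^+-\pmax_\ell\le 2^{-\ell}$. Output YES and halt as soon as $\underline p_\ell^->b$; output NO and halt as soon as $\overline p_\ell^+<a$; otherwise increment $\ell$. (For $\beta=+\infty$ the constraint \eqref{eq:opt2-EEB} is taken in its limiting semidefinite form, consistent with \eqref{eq:EEBineq} reducing to \eqref{eq:groundstate}, and nothing else changes.)

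For correctness I would use the chain $\pmin_\ell\le\Omin_\beta\le\Omax_\beta\le\pmax_\ell$ from Theorem~\ref{thm:main1}: if the loop outputs YES at step $\ell$ then $\Omin_\beta\ge\pmin_\ell\ge\underline p_\ell^->b$, which is the correct answer; if it outputs NO then $\Omin_\beta\le\Omax_\beta\le\pmax_\ell\le\overline p_\ell^+<a<b$, so $\Omin_\beta\not>b$ and NO is again correct. The two halting conditions can never be satisfied at the same step, since $\underline p_\ell^->b$ together with $\overline p_\ell^+<a$ would give $b<\underline p_\ell^-\le\pmin_\ell\le\pmax_\ell\le\overline p_\ell^+<a$, contradicting $a<b$. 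For termination I would invoke the monotone convergence in Theorem~\ref{thm:main1}: under the promise, either $\Omin_\beta>b$, so $\pmin_\ell\uparrow\Omin_\beta>b$ and hence $\underline p_\ell^-\ge\pmin_\ell-2^{-\ell}>b$ for all sufficiently large $\ell$ (the loop halts with YES); or $\Omax_\beta<a$, so $\pmax_\ell\downarrow\Omax_\beta<a$ and $\overline p_\ell^+\le\pmax_\ell+2^{-\ell}<a$ for all sufficiently large $\ell$ (the loop halts with NO). Thus the algorithm always halts and is correct, which is exactly decidability.

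I expect the one nontrivial step to be the middle one: certifying the numerical solution of the relative-entropy convex programs, i.e.\ producing guaranteed rational enclosures $[\underline p_\ell^-,\overline p_\ell^+]$ of the optimal values rather than heuristic approximations. The cleanest route is to sandwich the matrix relative entropy cone between two semidefinite-representable cones whose gap vanishes with an approximation degree (as in \cite{logapprox}): the outer cone gives a relaxation of \eqref{eq:opt2} and hence a certified bound on one side of each optimal value, the inner cone gives a restriction and a certified bound on the other side, and both are computed from rational data and converge. A barrier-method alternative (\cite{sc}) would instead require verifying strong duality — a Slater point is easy to write down — and reading a rigorous bound off an approximately dual-feasible point. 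Everything else — the reduction, the correctness check, and the termination argument — follows almost immediately from Theorem~\ref{thm:main1} and the promise, so I would keep those parts short.
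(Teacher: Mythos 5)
Your outer structure --- iterate over $\ell$, use the sandwich $\pmin_\ell \le \Omin_\beta \le \Omax_\beta \le \pmax_\ell$ for correctness, and the promise plus the convergence of Theorem~\ref{thm:main1} for termination --- is exactly the paper's, and that part is sound. The gap is in the step you yourself flag as ``the one nontrivial step'' and then resolve too quickly: producing certified rational enclosures of $\pmin_\ell$ and $\pmax_\ell$ when $\beta<\infty$. The semidefinite approximations of $D_{op}$ in \cite{logapprox} are built from $h_m(x)=2^m(x^{1/2^m}-1)$, whose error bound $h_m(x)-\log(x)\le 2^{-m}[(x-1)^2+(x^{-1}-1)^2]$ blows up as $x\to 0$; the gap between your inner and outer cones does \emph{not} vanish uniformly over the matrix relative entropy cone, only on regions where the arguments are bounded away from singularity. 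Since nothing in \eqref{eq:opt2} prevents the moment matrices $[\tomega(a_i^*a_j)]_{ij}$, $[\tomega(a_ja_i^*)]_{ij}$ from having arbitrarily small eigenvalues, your sandwich need not close on the feasible set, so you cannot certify $\pmin_\ell-\underline p_\ell^-\le 2^{-\ell}$; for the same reason a Slater point for the relative-entropy constraint is not ``easy to write down''--- strict feasibility near the singular boundary is precisely what is in doubt.

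The ingredient you are missing is the paper's Lemma~\ref{lem:momentLower} (adapted from \cite{bakshi2023}): an explicit rational constant $C'_{\Lambda_\ell}>0$ such that some KMS state (the open-boundary Gibbs limit) has all its moment matrices $\psd C'_{\Lambda_\ell}$. Adding $[\tomega(a_i^*a_j)]_{ij}\psd C'_{\Lambda_\ell}$ and $[\tomega(a_ja_i^*)]_{ij}\psd C'_{\Lambda_\ell}$ as extra constraints pushes the feasible set away from the singular locus, after which $h_m\to\log$ uniformly on the relevant spectra and the $D_{op}^{[m]}$-relaxed values do converge to the true ones. This surgery has a price you would also have to absorb: the strengthened relaxation may exclude KMS states arising from other boundary conditions, so monotonicity and convergence of $\pminp_\ell,\pmaxp_\ell$ to $\Omin_\beta,\Omax_\beta$ are lost; the paper recovers termination only because, under the promise, it suffices that the strengthened relaxation still contains one KMS state whose value of $O$ lies on the promised side of the threshold. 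Two smaller points: the paper sidesteps the need for any effective rate in the approximation degree $m$ by enumerating pairs $(\ell_i,m_i)$ exhaustively and deciding exact feasibility of the resulting rational SDPs (augmented with $\tomega(O)\ge a+(b-a)/2$ or $\tomega(O)\le a+(b-a)/2$) via quantifier elimination \cite{anai1998,basu2006}, rather than computing $2^{-\ell}$-accurate optimal values with a numerical solver; and for $\beta=\infty$ your plan does essentially coincide with the paper's, since \eqref{eq:opt2-EEB} degenerates to a semidefinite constraint and exact rational SDP feasibility testing suffices.
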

\begin{proof}
    Let us start with the case $\beta=\infty$.
    The algorithm consists of iterating over $\ell=2,3,\ldots$ and alternating between the minimization and maximization of the convex relaxations~\eqref{eq:opt2} with the translation-invariant constraint \eqref{eq:opt2-TI-new}.
    Note that for the ground-state Eq.~\eqref{eq:opt2-EEB} reduces to $0\nsd \; \beta [\tomega(a_i^*[H,a_j])]_{ij}$, such that the problem becomes an SDP in rational coefficients that can be solved exactly.
    More specifically, we use quantifier elimination \cite{anai1998,basu2006} to decide whether the hierarchy augmented by $\tomega(O)> a$ or augmented by $\tomega(O)< a$ is feasible.
    The algorithm stops as soon as one of them becomes infeasible and outputs NO or YES respectively.
    Due to Theorem~\ref{thm:main} this is guaranteed to happen at sufficiently large $\ell$.

    In the case $\beta<\infty$, we face additional difficulties due to the matrix logarithm, as our convex relaxation is no longer an SDP.
    We will use SDP approximations of the logarithm, however, those have convergence guarantees that are only uniform on intervals that are bounded away from 0.
    A way to deal with this issue is to strengthen our convex relaxation by an \textit{a priori} lower bound on the moment matrices derived from an analysis of the Hamiltonian.
    This is proven in the following Lemma by adapting a result from \cite{bakshi2023} to the infinite system moment problem setting.
    \begin{lemma}\label{lem:momentLower}
    For a given local translation-invariant Hamiltonian $H$ on $\mathbb{Z}^D$ and temperature $\beta$, there exists a KMS-state $\omega\in\KMS_\beta$ such that the following is true: for any finite set $\Lambda\subset\ZZ^D$, there exists an explicit constant $C_\Lambda >0$ depending on $\beta$, the locality and norm of the Hamiltonian, and $\Lambda$, such that the marginal $\rho_\Lambda\in\cA_\Lambda$ of $\omega$, defined by
    \begin{equation}
    \tr[\rho_{\Lambda} a]=\omega(a)\qquad\forall a\in\cA_\Lambda
    \end{equation}
    is lower bounded
    \begin{equation}
    \rho_\Lambda\psd C_\Lambda.
    \end{equation}
    \end{lemma}
    \begin{proof}
    Without loss of generality we assume that the Hamiltonian is defined on qubits since general qudit systems can be embedded by encoding each qudit in a Hilbert space $\CC^{2^{m^D}}$ such that $d\le 2^{m^D}$, corresponding to a hypercube of spins. This results in a Hamiltonian with higher but still finite range. 
    
    In \cite[Corollary 2.14]{bakshi2023} it is shown that for a given qubit Hamiltonian $H_V$ on a finite set $V$ and for a fixed $\Lambda \subset V$ there exist explicit constants $c_{\Lambda}, c'_{\Lambda}$ such that the following is true: for any Hermitian observable $X=\sum_i x_i a_i$ supported on $\Lambda \subset V$ (here,  $x_i\in \RR$ and $a_i$ is the orthonormal Pauli basis supported on $\Lambda$) we have 
    \begin{equation}
    \tr[X^2\tilde{\rho}_V]\ge \exp(-c_{\Lambda}-\beta c'_{\Lambda})\max x_i^2,
    \end{equation}
    where $\tilde{\rho}_V = e^{-\beta H_V}/\tr e^{-\beta H_V}$ is the Gibbs state at inverse temperature $\beta$ for the given finite Hamiltonian $H_V$.
    Crucially, the constants $c_{\Lambda},c'_{\Lambda}$ do not depend on $|V|$, but only on the locality of the Hamiltonian and the degree of its so-called dual interaction graph, which are fixed.\footnote{The dual interaction graph has nodes for each Hamiltonian term and edges between terms that overlap}
    If we take the weak*-limit of Gibbs states $\tilde{\rho}_V$ on increasing system sizes $V\uparrow \ZZ^D$ (taking an appropriate subsequence to ensure convergence, see \cite[Proposition 6.2.15]{brbook}) we get a KMS-state $\omega$ which satisfies
    \begin{equation}
    \omega\left(X^2\right)\ge \exp(-c_{\Lambda}-\beta c'_{\Lambda})\max_i x_i^2
    \end{equation}
    for all $X \in \cA_{\Lambda}$, where the $x_i$ are the coefficients of $X$ in the Pauli basis of $\cA_{\Lambda}$.

    Now let $\rho_{\Lambda}$ be the marginal of $\omega$ on $\Lambda$, and let $X$ be the projector onto a one-dimensional eigenspace of $\rho_\Lambda$ with lowest eigenvalue.
    Note that the $x_i$ are real as $X$ is a Hermitian operator and that
    \begin{equation}
    \max_i x_i^2\ge \frac{1}{4^{|\Lambda|}}\|(x_i)_i\|_2^2=\frac{1}{4^{|\Lambda|}}\|X\|_F^2
    =\frac{1}{4^{|\Lambda|}}.
    \end{equation}
    Thereby, we conclude
    \begin{equation}
    \rho_\Lambda\psd \tr\left[\rho_\Lambda X^2\right]=\omega\left(X^2\right)\ge\frac{1}{4^{|\Lambda|}}\exp(-c_{\Lambda}-\beta c'_{\Lambda}) =: C_\Lambda.
    \end{equation}
    \end{proof}
    We now choose the basis of $\cA_\Lambda$ in our convex relaxation~\eqref{eq:opt2} to be orthonormal.
    We see that the above Lemma proves a lower bound on the lowest eigenvalue of the moment matrix $\omega$ that corresponds to the KMS-state given in the Lemma:
\begin{equation}\label{eq:strong-pos}
    \min_{\|x\|_2^2=1}
    \bra{x}[\tomega(a_i^*a_j)]_{ij}\ket{x}=\min_{\|x\|_2^2=1}\sum_{ij} \overline{x_i}x_j\omega(a_i^*a_j)=\tr[\rho X^*X]\ge C_{\Lambda_\ell}\tr[X^*X]=C_{\Lambda_\ell}
    \end{equation}
    and similarly for $[\tomega(a_j a_i^*)]_{ij}$.
    We use this to strengthen our convex relaxation~\eqref{eq:opt2} by adding the contraints
    \begin{equation}
    \label{eq:lbmomconstraint}
    \begin{aligned}
    [\tomega(a_i^*a_j)]_{ij}&\psd C_{\Lambda_\ell}'>0,\\
    [\tomega(a_j a_i^*)]_{ij}&\psd C_{\Lambda_\ell}'>0
    \end{aligned}
    \end{equation}
    where $C'_{\Lambda_{\ell}}$ is a rational lower bound to $C_{\Lambda_{\ell}}$.
    Note that Lemma~\ref{lem:momentLower} need not hold for KMS states that arise from different boundary conditions as those might come with different constants $c$, $c'$.
    In the promise problem setting this is not relevant though as the relaxation still contains at least the open boundary Gibbs state and thereby the respective SDP problem is always feasible.
    Therefore, the optimal values of the modified problems are two sequences $\pmin_{\ell}\le\pminp_{\ell}\le\pmaxp_{\ell}\le\pmax_{\ell}$.
    Note that monotonicity and thereby convergence are no longer guaranteed for $\pminp_\ell, \pmaxp_\ell$.
    However, we only need that the optimal values exceed the decision threshold, which is guaranteed by convergence of $\pmin_{\ell}, \pmax_{\ell}$ and the above inequality.
    
    After this modification, since the moment matrix is bounded away from zero and upper bounded by 1, the eigenvalues in the argument of the logarithm for the matrix relative entropy are contained in an interval $[a,1/a]$ for some $a>0$. We now introduce a semidefinite approximation of $D_{op}$ from \cite{logapprox}. For any integer $m \geq 1$ let $h_m(x) = 2^m \left(x^{1/2^m}-1\right)$ which satisfy $\log(x) \leq h_{m+1}(x) \leq h_m(x)$ and note that $h_m \to \log$ uniformly on any compact interval in $(0,\infty)$ since $0 \leq h_m(x) - \log(x) \leq 2^{-m}[(x-1)^2+(x^{-1}-1)^2]$ for all $x > 0$. Consider the \emph{noncommutative perspective} \cite{effros2009matrix,ebadian2011perspectives} of $-h_m$ which we denote by $D_{op}^{[m]}$:
    \begin{equation}
    D_{op}^{[m]}(X,Y) := P_{-h_m}(X,Y) = -X^{1/2} h_m(X^{-1/2} Y X^{-1/2}) X^{1/2}.
    \end{equation}
    Then $D_{op}^{[m]}$ is jointly convex in $(X,Y)$, has a finite semidefinite programming formulation \cite{logapprox}, and satisfies $D_{op}^{[m]}(X\|Y) \nsd D_{op}(X\|Y)$ since $-h_m \leq -\log$.
    By changing the function $D_{op}$ in the entropy constraint \eqref{eq:opt2-EEB} by $D_{op}^{[m]}$ we get a semidefinite program $(\Pminp_{\ell,m})$ with rational coefficients  whose optimal value satisfies $\pminp_{\ell,m}\le\pminp_\ell$ and analogously for the upper bounds.
    
    Furthermore using the constraint \eqref{eq:lbmomconstraint}, the uniform convergence of $h_m$ to $\log$, and the continuity of $D_{op}$ on positive definite matrices, we know that $\pminp_{\ell,m}\uparrow \pminp_{\ell}$ and similarly $\pmaxp_{\ell,m} \downarrow \pmaxp_{\ell}$ as $m\uparrow \infty$.
    By choosing an exhaustive sequence $(\ell_i,m_i)$ of $\mathbb{N}^+\times\mathbb{N}^+$ and deciding the two feasibility problems as in the ground-state case we obtain an algorithm that decides the problem in finite time.
    
\end{proof}

\subsection{Commuting Hamiltonians and the high-temperature regime}

In this section we focus on commuting Hamiltonians $H$ which satisfy $[h_X,h_Y] = 0$ for any $X,Y \subset \sites$. For concreteness, we choose to focus on $H$ translation-invariant on $\Gamma = \ZZ^D$. For such Hamiltonians one can define a convex relaxation for the set of thermal equilibrium states directly from the KMS equality condition. Indeed, in this case if $b \in \cA_{\Lambda} \subset \cAloc$ where $\Lambda$ is finite, then
\begin{equation}
\label{eq:alphatb}
\alpha_{t}(b) = \lim_{\Lambda' \uparrow \ZZ^D} e^{itH_{\Lambda'}} b e^{-itH_{\Lambda'}} = e^{it\tilde{H}_{\Lambda}} b e^{-it \tilde{H}_{\Lambda}}
\end{equation}
where $\tilde{H}_{\Lambda}$ is as defined in \eqref{eq:HtildeLambda}.
Clearly, in this case $\alpha_t(b)$ can be extended to an entire function for all $b\in\cAloc$. 
As such, the $\beta$-KMS condition for a state $\omega$ can be written as
\begin{equation}
\label{eq:localKMScomm}
\omega(ba) = \omega(ae^{-\beta \tilde{H}_{\Lambda}} b e^{\beta \tilde{H}_{\Lambda}}) \qquad \forall a \in \cA_{\bar \Lambda}, \; b \in \cA_{\Lambda}
\end{equation}
for all $\Lambda \subset \ZZ^D$ finite. (Recall that $\overline \Lambda = \Lambda \cup \bdex \Lambda$ where $\bdex \Lambda$ is the external boundary of $\Lambda$.) A consequence of these KMS conditions\footnote{This is because $\alpha_t(h_X) = h_X$, since the Hamiltonian is commuting.} is that
\begin{equation}
\label{eq:localKMScomm2}
\omega([a,h_X]) = 0 \;\; \forall a \in \cA_{\bar \Lambda}, \; \forall X \subset \bar \Lambda
\end{equation}
for all $\Lambda \subset \ZZ^D$ finite. Given a fixed $\Lambda \subset \ZZ^D$ finite, let\footnote{Note that \eqref{eq:localKMScomm} for a fixed $\Lambda$ only implies \eqref{eq:localKMScomm2} for $X \subset \Lambda$ (and not $\bar \Lambda$). That's why we impose \eqref{eq:localKMScomm2} explicitly.}
\begin{equation}
\label{eq:GLambdacomm}
\begin{aligned}
\KMS_{\Lambda,\beta}^{\comm} = \Bigl\{ \omega : \cA_{\bar \Lambda} \to \CC \text{ s.t. }& \omega(1) = 1, \omega(a^* a) \geq 0 \;\; \forall a \in \cA_{\bar \Lambda}, \\
&\omega(a)=\omega(\tau_x(a))\; \forall a\in\cA_{\bar \Lambda},x\in \ZZ^D \text{ s.t. } \tau_x(a) \in \cA_{\bar \Lambda}\\
&\text{ and } \eqref{eq:localKMScomm} \text{ and } \eqref{eq:localKMScomm2} \Bigr\}.
\end{aligned}
\end{equation}
Since \eqref{eq:localKMScomm} is bilinear in $(a,b)$ it only needs to be imposed on a basis $\{a_i\}$ of $\cA_{\bar \Lambda}$ and $\{b_j\}$ of $\cA_{\Lambda}$. As such, the convex set \eqref{eq:GLambdacomm} is the feasible set of a semidefinite program. It is an outer relaxation for the set of $\bar \Lambda$-marginals of $\beta$-KMS states.

Given an observable $O \in \cA_{\Lambda}$, one can consider the convex optimization problems
\begin{equation}
\label{eq:optcomm}
\min/\max \; \{ \tomega(O) \text{ s.t. } \tomega \in \cG_{\Lambda,\beta}^{\comm} \}.
\end{equation}
The solution of this minimization (resp. maximization) problem is a \emph{lower bound} on $\Omin_{\beta}$ (resp. \emph{upper bound} on $\Omax_{\beta}$).

The asymptotic convergence of the hierarchy \eqref{eq:optcomm} for a sequence of increasing finite lattices $\Lambda_{\ell}$ to $\Omin_{\beta}$ and $\Omax_{\beta}$ can be proved in exactly the same way as in Theorem \ref{thm:main1}. For this hierarchy however we are also able to give a quantitative rate of convergence in the high temperature regime.

\begin{theorem}
\label{thm:maincomm1}
Assume that the Hamiltonian is translation-invariant, commuting and that the lattice dimension $D \leq 2$. For $D=1$ and $\beta_1=\infty$ or for $D=2$ and some $\beta_1 > 0$, we have for all $0 \leq \beta < \beta_1$, and for any local observable $O$ the following:
\begin{itemize}
    \item $\Omin_{\beta} = \Omax_{\beta} = \OTI_{\beta}$
    \item If we let $\pmin_{\ell}$ and $\pmax_{\ell}$ be the optimal values of the convex relaxations \eqref{eq:optcomm} with $\Lambda = \Lambda_{\ell} = \{-\ell,\ldots,\ell\}^D$, then $\pmin_{\ell} \leq \OTI_{\beta} \leq \pmax_{\ell}$, and furthermore
    \begin{equation}
\max \left\{ \OTI_{\beta} - \pmin_{\ell} , \pmax_{\ell} - \OTI_{\beta} \right\} \leq c_1 \|O\| e^{-c_2 \ell}.
\end{equation}
for some constants $c_1,c_2 > 0$ depending on the interaction, temperature and the support of $O$.
\end{itemize}
\end{theorem}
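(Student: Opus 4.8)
The plan is to first reduce the problem to computing a single number by invoking uniqueness of the equilibrium state, then to give a closed‑form description of the feasible points of \eqref{eq:optcomm} that exploits commutativity, and finally to estimate how far the value of a feasible point can be from $\<O\>_\beta$ by a high‑temperature clustering bound.

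\textbf{Step 1: uniqueness and bracketing.} In the stated regimes the infinite system has a unique translation‑invariant KMS state $\omega_\beta$, equal to the thermodynamic limit of finite Gibbs states: for $D=2$ and $\beta<\beta_1$ this follows from a convergent polymer expansion for commuting Hamiltonians at high temperature, and for $D=1$ from transfer‑operator (Araki‑type) arguments valid at all finite $\beta$. Hence $\Omin_\beta=\Omax_\beta=\<O\>_\beta$. Since $\omega_\beta$ is KMS, its restriction to $\cA_{\bar\Lambda_\ell}$ satisfies \eqref{eq:localKMScomm} and \eqref{eq:localKMScomm2} and so is feasible for \eqref{eq:optcomm}, giving $\pmin_\ell\le\<O\>_\beta\le\pmax_\ell$. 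It therefore remains to bound $|\tomega(O)-\<O\>_\beta|$ for an \emph{arbitrary} feasible $\tomega$ of \eqref{eq:optcomm} with $\Lambda=\Lambda_\ell$; the two one‑sided bounds on $\pmin_\ell,\pmax_\ell$ follow immediately.

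\textbf{Step 2: feasible points are boundary‑dressed finite Gibbs states.} Let $\rho$ be the density matrix on $\cA_{\bar\Lambda_\ell}$ of a feasible $\tomega$, and write $K=e^{-\beta\tilde{H}_{\Lambda_\ell}}$, so that $\alpha_{i\beta}(b)=KbK^{-1}$ for $b\in\cA_{\Lambda_\ell}$ by \eqref{eq:alphatb}. Expressing \eqref{eq:localKMScomm} (with $\Lambda=\Lambda_\ell$) in terms of $\rho$ and using nondegeneracy of the trace pairing on $\cA_{\bar\Lambda_\ell}$ yields $\rho\,b=KbK^{-1}\rho$ for all $b\in\cA_{\Lambda_\ell}$, i.e.\ $K^{-1}\rho$ commutes with $\cA_{\Lambda_\ell}$ and hence lies in its commutant inside $\cB(\cH_{\bar\Lambda_\ell})$, which is $\cA_{\bdex\Lambda_\ell}$. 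Thus $\rho=e^{-\beta\tilde{H}_{\Lambda_\ell}}M$ for some $M\in\cA_{\bdex\Lambda_\ell}$ (positivity of $\rho$ constrains $M$, but this will not be needed explicitly). Because $H$ is commuting, $e^{-\beta\tilde{H}_{\Lambda_\ell}}=\prod_{X\cap\Lambda_\ell\neq\emptyset}e^{-\beta h_X}$, so $\rho$ is a product of bounded‑range factors $e^{-\beta h_X}$ together with one inert factor $M$ supported on the external boundary $\bdex\Lambda_\ell$ --- precisely the structure to which cluster expansions apply. The restriction of $\omega_\beta$ from Step 1 is itself of this form for some $M_\beta$.

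\textbf{Step 3: boundary‑insensitivity.} Let $O\in\cA_{\Lambda_r}$ with $r$ fixed. Since $\tr\rho=1$, the quantity $\tr(\rho O)$ equals the ratio of $\tr[(\prod_X e^{-\beta h_X})MO]$ to $\tr[(\prod_X e^{-\beta h_X})M]$, and at high temperature (for $D=2$; at any finite $\beta$ for $D=1$, via the $1$D transfer operator) the polymer activities $\|e^{-\beta h_X}-I\|$ are small enough that this admits an absolutely convergent cluster expansion, uniformly in $M$: in the ratio all clusters not connected to $\supp O$ cancel, clusters connected only to $\supp O$ reproduce $\<O\>_\beta$ as $\ell\to\infty$, and the only $M$‑dependent contributions come from clusters joining $\supp O$ to $\bdex\Lambda_\ell$, which must span at least $\ell-r$ sites and so contribute at most $c_1\|O\|e^{-c_2(\ell-r)}$. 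Combined with Steps 1--2 this proves the theorem, with $c_2$ an inverse correlation length depending on $D$, the interaction and (for $D=2$) $\beta$, and $c_1$ absorbing the fixed radius $r$. I expect the main obstacle to be exactly this last step: making the cluster‑expansion estimate genuinely uniform over the arbitrary boundary dressing $M$ --- in particular when $\|M\|$ is large --- which forces one to keep $M$ unexpanded so that it cancels from the normalization while expanding only the $e^{-\beta h_X}$, and to control the combinatorics of clusters bridging $\supp O$ and $\bdex\Lambda_\ell$ tightly enough that the rate is $e^{-c_2\ell}$ rather than merely subexponential; Steps 1 and 2 are, by comparison, short.
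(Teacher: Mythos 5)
Your overall architecture matches the paper's: feasible points of \eqref{eq:optcomm} are boundary-perturbed finite Gibbs states, and the error is controlled by showing that the expectation of $O$ is insensitive to the boundary dressing. Your Step~2 is in fact a clean commutant-based derivation of what the paper proves as Lemma~\ref{lem:commutingPerturbed} (that $\omega$ is the local Gibbs state $\varphi^{\Lambda}$ dressed by a boundary term), and Step~1 is consistent with the paper, which obtains uniqueness as a byproduct of the convergence proof rather than as an input.

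The gap is in Step~3, and it is exactly the one you flag --- but you cannot set it aside, and your claim that positivity of $\rho$ ``will not be needed explicitly'' is wrong. Your boundary operator $M=e^{\beta\tilde H_{\Lambda_\ell}}\rho$ is a priori an arbitrary element of $\cA_{\bdex\Lambda_\ell}$ (not even Hermitian: $\rho=\rho^*$ only gives $M^*=KMK^{-1}$), it is a single ``polymer'' covering the entire boundary, and no Koteck\'y--Preiss-type convergence criterion applies to it; keeping it unexpanded in the ratio requires lower-bounding the denominator relative to $\|M\|$, which is precisely where the state structure enters. The paper resolves this by normalizing first: the perturbed functional $\omega^W(\cdot)=\omega(e^{\beta W}\,\cdot\,)/\omega(e^{\beta W})$ is shown to be a genuine \emph{product state} $\varphi^{\Lambda}\otimes\hat\omega$ with $\hat\omega$ a bona fide state on $\bdex\Lambda$; then $Y=(\mathrm{id}\otimes\hat\omega)(e^{-\beta W})\in\cA_{\Lambda}$ satisfies $e^{-\beta\|W\|}\le Y\le e^{\beta\|W\|}$, and $\tomega(O)=\varphi^{\Lambda}(OY)/\varphi^{\Lambda}(Y)$ is compared to $\varphi^{\Lambda}(O)$ using off-the-shelf uniform clustering for finite-volume Gibbs states (\cite{bluhm2022} in 1D, \cite{kliesch2014} in 2D at high temperature) together with local indistinguishability $|\varphi^{\Lambda}(O)-\<O\>_{\beta}|\le\|O\|K''e^{-\delta\ell}$ --- no new cluster expansion is run. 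Finally, the 2D restriction $\beta<\beta_1$ arises quantitatively from trading $\|Y\|\|Y^{-1}\|\le e^{2\beta r\|h\|\ell}$ (the surface term $W$ has norm growing linearly in $\ell$) against the clustering rate $\gamma(\beta)\to\infty$ as $\beta\to 0$; your sketch does not account for this surface-term growth, which is the second place your $c_1\|O\|e^{-c_2(\ell-r)}$ estimate would need repair.
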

\begin{proof}
We start with an important lemma showing that states in $\KMS_{\Lambda,\beta}^{\comm}$ can be expressed as a product of the local (i.e., finite) Gibbs state on $\Lambda$ and a boundary term on $\bdex \Lambda$, after a suitable perturbation that eliminates the surface interaction terms. The statement is closely related to \cite[Prop. 6.2.17]{brbook}. The difference is that the state $\omega$ in the lemma below is not assumed to be a ``full'' KMS state, but just a state in $\KMS_{\Lambda,\beta}^{\comm}$.

\begin{lemma}\label{lem:commutingPerturbed}
Assume $H$ is a commuting Hamiltonian. Let $\Lambda \subset \ZZ^D$ be a finite set, and let $\omega \in \KMS_{\Lambda,\beta}^{\comm}$.
Let 
\begin{equation}
W = \sum_{\substack{X\cap \Lambda \neq \emptyset\\ X \cap \Lambda^c \neq \emptyset}} h_X \in \cA_{\bar \Lambda}
\end{equation}
be the surface interaction term.
Let
\begin{equation}
\varphi^{\Lambda}(x) = \frac{\tr (e^{-\beta H_{\Lambda}}x)}{\tr e^{-\beta H_{\Lambda}}}.
\end{equation}
There exists a state $\hat\omega:\cA_{\bdex\Lambda}\to\CC$ on the boundary such that using the product state $\omega^W:\cA_{\bar\Lambda}\to\CC$ defined by
\begin{equation}
\omega^W(xc)=\varphi^\Lambda(x)\hat\omega(c)
\end{equation}
we can express $\omega$ as a perturbation of the local Gibbs state with a boundary condition
\begin{equation}
\omega(O)=\frac{\omega^W(Oe^{-\beta W})}{\omega^W(e^{-\beta W})}.
\end{equation}
\end{lemma}
\begin{proof}
We define 
\begin{equation}
\omega^{W}(a) = \frac{\omega(e^{\beta W} a)}{\omega(e^{\beta W})} = \frac{\omega(a e^{\beta W})}{\omega(e^{\beta W})},
\end{equation}
where the equality follows using the $\beta$-KMS condition and the fact that $H$ is commuting. We first prove that it has the claimed form, i.e., is product with the factor on $\Lambda$ being $\varphi^\Lambda$.
Let $a,b \in \cA_{\Lambda}$ and $c \in \cA_{\bdex \Lambda}$. Then condition \eqref{eq:localKMScomm} tells us that
\begin{equation}
\label{eq:localKMSbac}
\begin{aligned}
\omega^{W}(bac) &= \frac{1}{\omega(e^{\beta W})} \omega(bac e^{\beta W})\\
                &= \frac{1}{\omega(e^{\beta W})} \omega(ac e^{\beta W} e^{-\beta \tilde{H}_{\Lambda}} b e^{\beta \tilde{H}_{\Lambda}})\\
                &= \frac{1}{\omega(e^{\beta W})} \omega(ac e^{-\beta H_{\Lambda}} b e^{\beta H_{\Lambda}} e^{\beta W})\\
                &= \omega^{W}(ac e^{-\beta H_{\Lambda}} b e^{\beta H_{\Lambda}})\\
                &= \omega^{W}(a e^{-\beta H_{\Lambda}} b e^{\beta H_{\Lambda}}c)
\end{aligned}
\end{equation}
where in the last line we used the fact that $c \in \cA_{\bdex \Lambda}$ and so it commutes with $e^{-\beta H_{\Lambda}} b e^{\beta H_{\Lambda}} \in \cA_{\Lambda}$.

From \eqref{eq:localKMSbac} we get that for any fixed $c \in \cA_{\bdex \Lambda}$, the state $x \in \cA_{\Lambda} \mapsto \omega^{W}(xc)/\omega^{W}(c)$ satisfies the KMS condition for the finite-dimensional Hamiltonian $H_{\Lambda}$. Since this is a finite system, this means that necessarily for any $x \in \cA_{\Lambda}$,
\begin{equation}
\frac{\omega^{W}(xc)}{\omega^{W}(c)} = \frac{\tr(e^{-H_{\Lambda}} x)}{\tr e^{-H_{\Lambda}}}.
\end{equation}
This proves that $\omega^W$ is product and has the marginal on $\Lambda$ as claimed.
We conclude by computing by
\begin{equation}
\omega(O)=\frac{\omega(O)}{\omega(e^{\beta W})}\frac{\omega(e^{\beta W})}{\omega(\id)}=\frac{\omega^W(Oe^{-\beta W})}{\omega^W(e^{-\beta W})},
\end{equation}
using the definition of $\omega^W$ twice.
This proves Lemma \ref{lem:commutingPerturbed}.
\end{proof}

We are now ready to prove exponential convergence of the convex optimization hierarchy to $\<O\>_{\beta}$ for $0\leq \beta < \beta_1$. We start with the 1D case. Our main tool will be the decay of correlations and local indistinguishability of Gibbs states for one-dimensional chains \cite{bluhm2022}. Without loss of generality let us assume a 2-local interaction; the general case follows from blocking the sites first. Let $\omega \in \KMS_{\Lambda_{\ell},\beta}^{\comm}$ and consider the perturbation $\omega^{W}$ from Lemma \ref{lem:commutingPerturbed}. The surface interaction term $W$ has support on the sites $-(\ell+1),-\ell,\ell,\ell+1$ and $\|\exp(\pm\beta W)\|\le\exp(2\beta\|h\|)$ is bounded by a constant.
We first apply the decay of correlations to bound the effect of the perturbation.
From Lemma~\ref{lem:commutingPerturbed} we have
\begin{equation}
\omega(O)=\frac{\omega^W(Oe^{-\beta W})}{\omega^W(e^{-\beta W})}.
\end{equation}
Let us introduce the operator $Y=(id_{\Lambda} \otimes \hat\omega_{\bdex \Lambda})(e^{-\beta W})\in\cA_\Lambda$, which satisfies $\|Y\|\le\exp(2\beta\|h\|)$.
We further decompose $Y=Y_LY_R$ where $Y_L$, $Y_R$ are supported on $-\ell$ and $\ell$ respectively.
We can write
\begin{equation}
\omega(O)=\frac{\varphi^\Lambda(OY)}{\varphi^\Lambda(Y)}
\end{equation}
and thereby
\begin{align}
|\omega(O)-\varphi^\Lambda(O)|&\le\frac{|\varphi^\Lambda(OY)-\varphi^\Lambda(OY_L)\varphi^\Lambda(Y_R)|}{\varphi^\Lambda(Y)}\\
&\quad+\frac{|\varphi^\Lambda(OY_L)\varphi^\Lambda(Y_R)-\varphi^\Lambda(Y_L)\varphi^\Lambda(Y_R)\varphi^\Lambda(O)|}{\varphi^\Lambda(Y)}\\
&\quad+\frac{|\varphi^\Lambda(O)\varphi^\Lambda(Y_L)\varphi^\Lambda(Y_R)-\varphi^\Lambda(O)\varphi^\Lambda(Y)|}{\varphi^\Lambda(Y)}\\
&\le \|O\| K'\exp(-\alpha \ell)
\end{align}
where to each term we can apply the uniform clustering from \cite[Theorem 6.2]{bluhm2022}, which extends the decay of correlations already known from Araki's seminal paper \cite{araki1969} to the case of Gibbs states on finite intervals.
The constant $K'$ here comes from the interaction dependent constant from the cited result combined with the (constant) norms of $Y^{-1}$, $Y_L$, $Y_R$ and the locality of $O$ reducing the distance between the supports of $O$ and $Y$. The constant $\alpha$ depends on the interaction.

We conclude by \cite[Theorem 4.16 (ii)]{perezgarcia2020} (see also \cite{araki1969}), which proves that expectations in the global Gibbs state are approximated by local ones, with an error decaying exponentially with the distance to the boundary, i.e.,
\begin{align}
|\varphi^\Lambda(O)-\OTI_{\beta}|\le \|O\|K'' \exp(-\delta \ell).
\end{align}
Again, $\delta$ and $K''$ depend on the interaction and the latter in addition on the locality of $O$.
This proves Theorem \ref{thm:maincomm1} in the case $D=1$, for all finite $\beta \geq 0$.

In higher dimension we expect the theorem to break down for arbitrary temperatures due to the existence of phase transitions.
Nevertheless, at sufficiently high temperatures, the two main ingredients, decay of correlations and local indistinguishability still hold.
A technical problem we are facing is however the (super)exponential growth of $e^{\beta W}$ since the support and norm of $W$ scale with $\ell^{D-1}$, where $\ell$ is the distance between $O$ and the boundary.
In 2D, however, we can still trade off this exponential growth against the decay rates by choosing sufficiently high temperatures.
We closely follow the proof of the 1D case.
Again we define 
$Y=(id_{\Lambda} \otimes \hat\omega_{\bdex \Lambda})(e^{-\beta W})\in\cA_\Lambda$.
This time, however, we can only bound $\|Y\|,\|Y^{-1}\|\le\exp(\beta \ell r\|h\|)$, where $r$ is a constant depending on the locality of the Hamiltonian. For the decay of correlations we can resort to the high temperature result in \cite[Theorem 2]{kliesch2014}.
Below a critical inverse temperature this gives
\begin{align}
|\omega(O)-\varphi^\Lambda(O)|&=\frac{|\varphi^\Lambda(OY)-\varphi^\Lambda(O)\varphi^\Lambda(Y)|}{\varphi^\Lambda(Y)}\\
&\le \|O\|\|Y\|\|Y^{-1}\| C\exp(-\gamma \ell)\\
&\le\|O\|C\exp((2\beta r\|h\|-\gamma)\ell)
\end{align}
where $C$ and $\gamma$ depend on the interaction and temperature.
Note that $\gamma$ grows unboundedly for $\beta\to0$.
By choosing $\beta$ sufficiently small such that $2\beta r\|h\|<\gamma$ this implies exponential decay.

Finally, local indistinguishability is proven in \cite[Corollary 2]{kliesch2014} again above the same critical temperature and with the same decay rate.
The formulation proves convergence of marginals in 1-norm distance but by choosing the region to be the support of $O$ the result for observables follows:
\begin{equation}
|\varphi^\Lambda(O)-\OTI_{\beta}|\le\|O\|K''\exp(-\gamma \ell)
\end{equation}
\end{proof}
\begin{remark}
    We remark that in the proof above, we do not make explicit use of the uniqueness of the $\beta$-KMS state for $0\leq \beta < \beta_1$. This uniqueness actually follows from the proof since it holds for any choice of KMS-state and any observable.
The used results on decay of correlations based on cluster expansions are in fact the same techniques used to prove the uniqueness of the KMS-state at sufficiently high temperature.
\end{remark}

\section{Numerical experiments}
\label{sec:numerics}
We consider the one-dimensional transverse field Ising model
\begin{equation}
\label{eq:tfising}
H = -\sum_{i \in \ZZ} Z_i Z_{i+1} + g X_i
\end{equation}
where $X_i$ and $Z_i$ are the usual Pauli matrices at site $i$ and $g \geq 0$. The model is exactly solvable \cite{pfeuty}, and the ground state exhibits a phase transition at $g=1$. For $g < 1$, the ground space is degenerate, and for $g > 1$, the ground state is unique. The model is gapped for $g \neq 1$, and gapless at $g=1$. There is no phase transition with respect to the temperature as this is a one-dimensional model.

Let $\KMS_{g,\beta}$ be the set of translation-invariant equilibrium states for the parameter $g$ of the Hamiltonian, and at inverse temperature $\beta$. For $g<1$, $|\KMS_{g,\infty}| > 1$, while for $g>1$ or $\beta < \infty$ we have $|\KMS_{g,\beta}|=1$.

We focus on two observables, namely the $Z$-magnetization per site, i.e., $O = Z_0$, and the nearest-neighbour correlation function $O = Z_0 Z_1$.

\paragraph{Magnetization} When $g=0$, the model is a simple one-dimensional Ising model, and there are two translation-invariant ground states, namely the all-up and all-down states. This means that at $g=0$ the magnetization can take any value between $-1$ and $+1$. For general $g \geq 0$, it has been shown that \cite[Eq. (3.12a)]{pfeuty}
\begin{equation}
\{\omega(Z_0) : \omega \in \KMS_{g,\infty}\} = [-M_z(g),M_z(g)]
\end{equation}
where
\begin{equation}
M_z(g) = \begin{cases}
(1-g^2)^{1/8} & \text{if } g < 1\\
0 & \text{if } g \geq 1.
\end{cases}
\end{equation}

\if0

Table \ref{tbl:tfising1dmagnetization} shows the true value of the magnetization for $g=0.8$ and $g=1.2$, along with the bounds obtained from the convex relaxation \eqref{eq:opt2}. The relaxations were solved respectively with a 4-site and 5-site region $\Lambda$. (Note that the semidefinite program for a $L$-site region involves a positivity constraint of a matrix of size $4^{L-1}$.) We see as expected that the bounds with $L=5$ are tighter than the bounds for $L=4$.

\begin{table}[ht]
\centering
\begin{tabular}{lcccc}
\toprule
& \multicolumn{2}{c}{$\beta=\infty$} & \multicolumn{2}{c}{$\beta=1$}\\
 \cmidrule(r){2-3}  \cmidrule(r){4-5}
& $g=0.8$    & $g=1.2$ & $g=0.8$ & $g=1.2$ \\
\midrule
True & 0.8801 & 0 & 0 & 0\\
$L=5$ u.b. & 0.8809 & 0.0444 & 0.0844 & 0.0692\\
$L=4$ u.b. & 0.8823 & 0.2813 & 0.2146 & 0.1814\\
\bottomrule
\end{tabular}
\caption{Bounds on magnetization $\<Z_0\>_{g,\beta}$ of the 1D TF Ising Model for different parameters $g$ and $\beta$. By symmetry of the model we have $\obsmin{Z_0}_{\beta} = -\obsmax{Z_0}_{\beta}$. In the table we only show $\obsmax{Z_0}_{\beta}$ and upper bounds obtained from the convex optimization approach.}
\label{tbl:tfising1dmagnetization}
\end{table}
\fi

Figure \ref{fig:tfising1dmagnetization} (left) shows the true values of the magnetization as a function of $g$, along with the bounds obtained from the convex relaxation \eqref{eq:opt2}. The relaxations were solved respectively with a 4-site and 5-site region $\Lambda$. (Note that the semidefinite program for a $L$-site region involves a positivity constraint of a matrix of size $4^{L-1}$.) We see as expected that the bounds with $L=5$ are tighter than the bounds for $L=4$. We also observe that the bounds seem very accurate away from the critical point $g=1$.

In Figure \ref{fig:tfising1dmagnetization} (right) we plot the lower and upper bounds on the magnetization at positive temperature $\beta=1$. Note that for positive temperature, there is a unique thermal equilibrium state and the average magnetization is 0 for all $g \geq 0$.

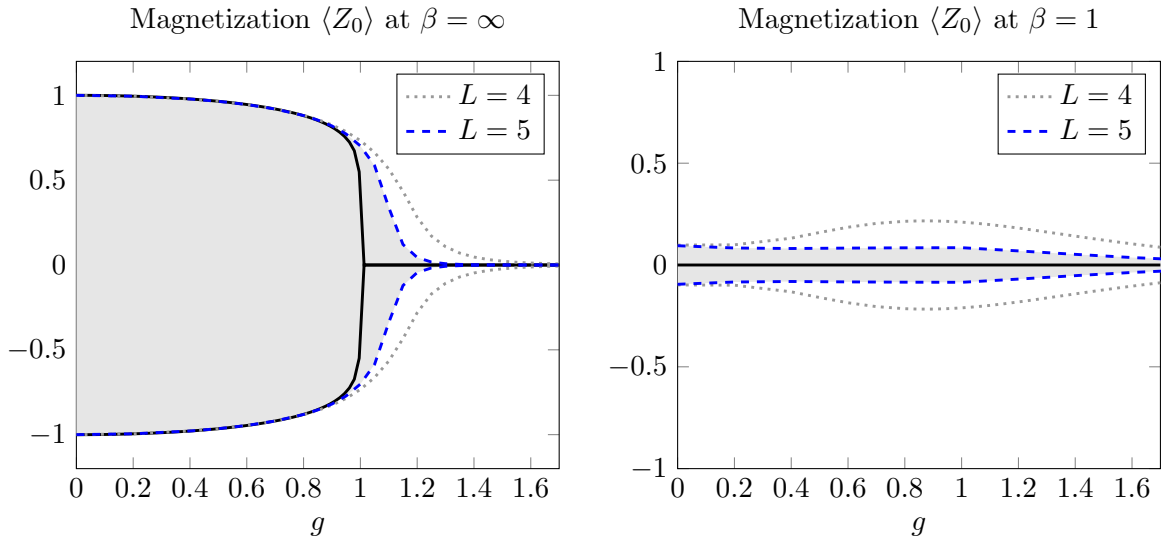
\begin{figure}[h]
\centering
\begin{tikzpicture}
\begin{axis}[xlabel={$g$},title={Magnetization $\<Z_0\>$ at $\beta=\infty$},yticklabel style={/pgf/number format/fixed},grid=none,width=8cm,height=7cm,xmin=0,xmax=1.7,legend entries={$L=4$,$L=5$},legend pos=north east,legend cell align={left}]
\addplot [color=black, domain=0.0:1.7, samples=100, line width=1.1, forget plot]{max(0,1-x^2)^(1/8)};
\addplot [color=black, domain=0.0:1.7, samples=100, line width=1.1, forget plot]{-max(0,1-x^2)^(1/8)};
\addplot [color=lightgrey, dotted, line width=1.1] table[x=g, y=Mz]{tfising1d-Mz-beta=inf-L=4.tex};
\addplot [color=lightgrey, dotted, line width=1.1, forget plot] table[x=g, y expr={-\thisrow{Mz}}]{tfising1d-Mz-beta=inf-L=4.tex}; 
\addplot [color=blue, name path=Zub, dashed, line width=1.1] table[x=g, y=Mz]{tfising1d-Mz-beta=inf-L=5.tex};
\addplot [color=blue, name path=Zlb, dashed, line width=1.1, forget plot] table[x=g, y expr={-\thisrow{Mz}}]{tfising1d-Mz-beta=inf-L=5.tex}; 
\addplot[gray!20] fill between[of=Zub and Zlb];
\end{axis}
\end{tikzpicture}
\quad
\begin{tikzpicture}
\begin{axis}[xlabel={$g$},title={Magnetization $\<Z_0\>$ at $\beta=1$},yticklabel style={/pgf/number format/fixed},grid=none,width=8cm,height=7cm,xmin=0,xmax=1.7,ymin=-1,ymax=1,legend entries={$L=4$,$L=5$},legend pos=north east,legend cell align={left}]
\addplot [color=black, domain=0.0:1.7, samples=100, line width=1.1, forget plot]{0};
\addplot [color=lightgrey, name path=Zub4, dotted, line width=1.1] table[x=g, y=Mz]{tfising1d-Mz-beta=1-L=4.tex};
\addplot [color=lightgrey, name path=Zlb4, dotted, line width=1.1, forget plot] table[x=g, y expr={-\thisrow{Mz}}]{tfising1d-Mz-beta=1-L=4.tex};
\addplot [color=blue, name path=Zub5, dashed, line width=1.1] table[x=g, y=Mz]{tfising1d-Mz-beta=1-L=5.tex};
\addplot [color=blue, name path=Zlb5, dashed, line width=1.1, forget plot] table[x=g, y expr={-\thisrow{Mz}}]{tfising1d-Mz-beta=1-L=5.tex};
\addplot[gray!20] fill between[of=Zub5 and Zlb5];
\end{axis}
\end{tikzpicture}
\caption{Upper and lower bounds on magnetization $\<Z_0\>$ for thermal equilibrium states ($\beta=\infty$ and $\beta=1$) of the quantum Ising model \eqref{eq:tfising}, as a function of $g$. The shaded region is the interval corresponding to $L=5$.}
\label{fig:tfising1dmagnetization}
\end{figure}

\paragraph{Spin-spin correlation function} Next, we consider the nearest-neighbour correlation function $O=Z_0Z_1$. The exact value for all choices of $g$ and $\beta \in [0,+\infty]$ is given by \cite[Eq. (10.58)]{sachdev_2011}:\footnote{Here the interval reduces to a single point, even when the set of equilibrium  states is degenerate for $g < 1$ and $\beta=\infty$.}
\begin{equation}
\<Z_0 Z_1\>_{g,\beta} = \frac{1}{\pi} \int_{0}^{\pi} \frac{1+g\cos k}{\sqrt{1+2g\cos k+g^2}} \tanh(J \beta \sqrt{1+2g\cos k+g^2}) dk.
\end{equation}

\if0
Table \ref{tbl:tfising1dcorrz} compares the bounds obtained from the convex relaxations with the true value. For this observable we see a bigger gap with the true value. We also notice that at $\beta=\infty$ (ground state), the upper bounds are much closer to the true value than the lower bounds.
\fi

Figure \ref{fig:tfising1dcorrz} compares the bounds obtained from the convex relaxations with the true value. For this observable we see a bigger gap with the true value, however this gap seems to decay quickly as $g\gg1$ away from the critical region. We also notice that at $\beta=\infty$ (ground state), the upper bounds are much closer to the true value than the lower bounds, especially for $g<1$.

\begin{figure}[h!]
\centering
\begin{tikzpicture}
\begin{axis}[xlabel={$g$},title={Correlation $\<Z_0 Z_1\>$ at $\beta=\infty$},yticklabel style={/pgf/number format/fixed},grid=none,width=8cm,height=7cm,xmin=0.1,xmax=1.9,legend entries={$L=4$,$L=5$}]
\addplot [color=black, line width=1.1, forget plot] table[x=g, y=ZZ]{tfising1d-ZZ-beta=inf-true.tex};
\addplot [color=lightgrey, dotted, line width=1.1] table[x=g, y=ZZub]{tfising1d-ZZ-beta=inf-L=4.tex};
\addplot [color=lightgrey, dotted, line width=1.1, forget plot] table[x=g, y=ZZlb]{tfising1d-ZZ-beta=inf-L=4.tex};
\addplot [color=blue, name path=ZZub, dashed, line width=1.1] table[x=g, y=ZZub]{tfising1d-ZZ-beta=inf-L=5.tex};
\addplot [color=blue, name path=ZZlb, dashed, line width=1.1, forget plot] table[x=g, y=ZZlb]{tfising1d-ZZ-beta=inf-L=5.tex};
\addplot[gray!20] fill between[of=ZZub and ZZlb];
\end{axis}
\end{tikzpicture}
\quad
\begin{tikzpicture}
\begin{axis}[xlabel={$g$},title={Correlation $\<Z_0 Z_1\>$ at $\beta=1$},yticklabel style={/pgf/number format/fixed},grid=none,width=8cm,height=7cm,xmin=0.1,xmax=1.9,legend entries={$L=4$,$L=5$}]
\addplot [color=black, line width=1.1, forget plot] table[x=g, y=ZZ]{tfising1d-ZZ-beta=1-true.tex};
\addplot [color=lightgrey, name path=ZZub4, dotted, line width=1.1, forget plot] table[x=g, y=ZZub]{tfising1d-ZZ-beta=1-L=4.tex};
\addplot [color=lightgrey, name path=ZZlb4, dotted, line width=1.1] table[x=g, y=ZZlb]{tfising1d-ZZ-beta=1-L=4.tex};
\addplot [color=blue, name path=ZZub5, dashed, line width=1.1, forget plot] table[x=g, y=ZZub]{tfising1d-ZZ-beta=1-L=5.tex};
\addplot [color=blue, name path=ZZlb5, dashed, line width=1.1] table[x=g, y=ZZlb]{tfising1d-ZZ-beta=1-L=5.tex};
\addplot[gray!20] fill between[of=ZZub5 and ZZlb5];
\end{axis}
\end{tikzpicture}
\caption{Upper and lower bounds on correlation $\<Z_0 Z_1\>$ for thermal equilibrium states ($\beta=\infty$ and $\beta=1$) of the quantum Ising model \eqref{eq:tfising}, as a function of $g$. The shaded region is the interval corresponding to $L=5$.}
\label{fig:tfising1dcorrz}
\end{figure}
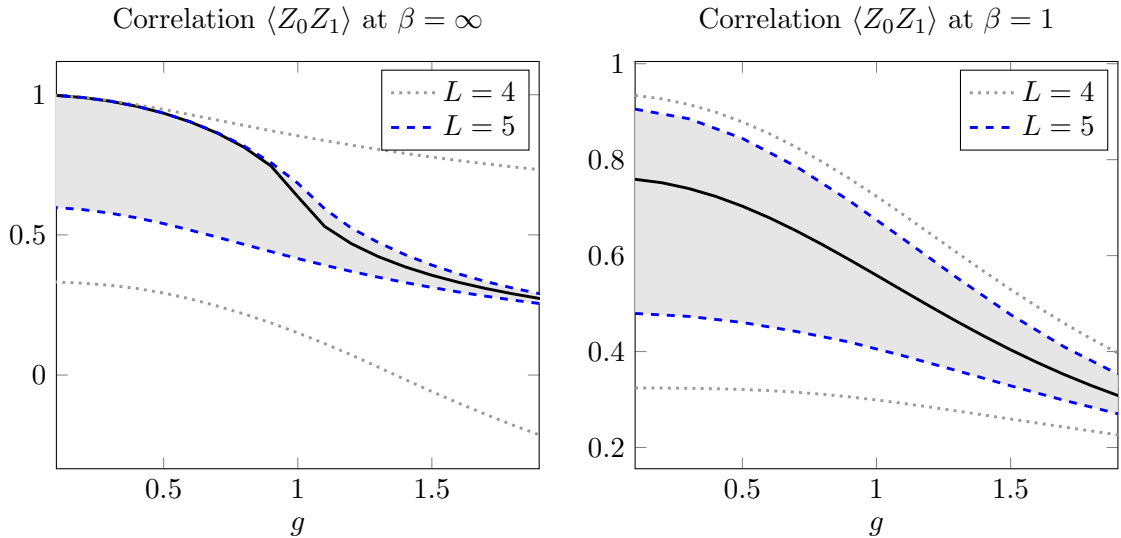

\if0
\begin{table}
\centering
\begin{tabular}{lcccc}
\toprule
& \multicolumn{2}{c}{$\beta=\infty$} & \multicolumn{2}{c}{$\beta=1$}\\
 \cmidrule(r){2-3}  \cmidrule(r){4-5}
& $g=0.8$    & $g=1.2$ & $g=0.8$ & $g=1.2$ \\
\midrule
$L=4$ l.b. & 0.2177 & 0.0716 & 0.3109& 0.2843\\
$L=5$ l.b. & 0.4660 & 0.3697 & 0.4303 & 0.3735\\
True & 0.8125 & 0.4685 & 0.6224 & 0.4941\\
$L=5$ u.b. & 0.8168 & 0.5239 & 0.7504 & 0.5937\\
$L=4$ u.b. & 0.8904 & 0.8201 & 0.7946 & 0.6458\\
\bottomrule
\end{tabular}
\caption{Bounds on correlation $\<Z_0 Z_1\>_{g,\beta}$ of the 1D TF Ising Model for different parameters $g$ and $\beta$.}
\label{tbl:tfising1dcorrz}
\end{table}
\fi
For $\beta < \infty$ the optimization problems involving the matrix relative entropy were solved using the semidefinite programming approximations \cite{logapprox}. We used the Mosek interior-point solver \cite{mosek} for all the computations, except for the case $L=5,\beta < \infty$ where we used the first-order solver SCS \cite{scs}.

\section*{Acknowledgments}
We thank Daniel Stilck Fran\c ca for helpful discussions.
We thank David P\'erez-Garc\'ia for pointing out the necessity of using subnets in the proof of Theorem~\ref{thm:main1}.
HF acknowledges funding from UK Research and Innovation (UKRI) under the UK
government’s Horizon Europe funding guarantee EP/X032051/1. OF acknowledges funding by the European Research Council (ERC Grant AlgoQIP, Agreement No. 851716) as well as 
by the European Union’s Horizon 2020 within the QuantERA II Programme under Grant VERIqTAS Agreement No 101017733. SOS acknowledges support from the UK Engineering and Physical Sciences Research Council (EPSRC) under grant number EP/W524141/1.

\newpage
\clearpage

\bibliography{refs}
\bibliographystyle{naturemag}
\clearpage
\appendix

\section{Proof of Theorem \ref{thm:matrixEEB}}
\label{sec:matrixEEBgeneral}

Our proof is analogous to \cite[Theorem 5.3.15]{brbook}.
Let $\{a_1,\ldots,a_m\}$ be a finite subset of $\cAloc$.
Let $(\mathcal{H},\pi,\Omega)$ be the cyclic representation for $\omega$ of $\cA$ \cite[Corollary 2.3.16]{brbook} and
\begin{equation}
U(t)=\int_{-\infty}^\infty e^{-ipt}dE(p)
\end{equation}
the canonical unitary group implementing the time evolution $\alpha_t$ \cite[Corollary 2.3.17]{brbook} in its spectral decomposition.
Analogously to \cite[page 88]{brbook}, we can define the following positive $m\times m$ Hermitian matrix-valued measures $\mu$ and $\nu$.
\begin{align}
[\mu(\hat{f})]_{kl}& = (2\pi)^{-1/2} \int_{-\infty}^{\infty} dt f(t) \omega(a_k^* \alpha_t(a_l))=\int_{-\infty}^\infty\langle\pi(a_k)\Omega,dE(q)\pi(a_l)\Omega\rangle\hat{f}(q)dq\\
[\nu(\hat{f})]_{kl} &= (2\pi)^{-1/2} \int_{-\infty}^{\infty} dt f(t) \omega(\alpha_t(a_l) a_k^*)=\int_{-\infty}^\infty\langle\pi(a_k)\Omega,dE(-q)\pi(a_l)\Omega\rangle\hat{f}(q)dq
\end{align}

Here, the first part of each definition assumes a compactly supported and infinitely differentiable function $\hat f$ and its inverse Fourier transform
\begin{equation}
f(z)=(2\pi)^{-1/2}\int_{\infty}^\infty \hat f(p)e^{ipz}dp,
\end{equation}
whereas the second part of the definition of the measures defines their continuation to arbitrary bounded continuous functions $\hat f$.

We see from the second part that $[\mu(1)]_{kl}=\omega(a_k^*a_l) = A_{kl}$ and $\nu(1)=B$.
Furthermore, we show that $d\nu(p) = e^{p\beta} d\mu(p)$ following the analogous statement in \cite[Proposition 5.3.14]{brbook}.
We do this by computing for a compactly supported infinitely differentiable function $\hat f$
\begin{align}
    [\mu(\hat f)]_{kl}&=\int_{\infty}^\infty dt f(t)\omega(a_k^*\alpha_t(a_l))\\
    &=\int_{\infty}^\infty dt f(t+i\beta) \omega(\alpha_t(a_l)a_k^*)=[\nu(k_\beta\hat f)]_{kl}
\end{align}
where $k_{\beta}(p)=e^{-\beta p}$, 
also using the characterization of KMS-states from \cite[Proposition 5.3.12]{brbook}.

We now prove the inequality $D_{op}(A\|B) \nsd \beta C$.
Let $\gamma$ be the positive real-valued measure defined by $d\gamma(p) =  \tr[d\mu(p)]$. Since the trace is faithful, the measure $\mu$ has a matrix-valued density with respect to $\gamma$, i.e., we can write $d\mu(p) = \tilde{A}(p) d\gamma(p)$ where $\tilde{A}:\RR \to \H^m_{+}$ (see e.g., \cite[Section 4]{farenick2007jensen}). Then we can write
\begin{equation}
\begin{aligned}
D_{op}(A\|B) &= D_{op}\left(\int 1 d\mu(p) \| \int 1 d\nu(p)\right)\\
             &= D_{op}\left(\int \tilde{A}(p) d\gamma(p) \| \int  e^{p\beta} \tilde{A}(p) d\gamma(p)\right)\\
             &\nsd \int D_{op}(\tilde{A}(p)\|e^{p\beta} \tilde{A}(p)) d\gamma(p)\\
             &= \beta \int -p \tilde{A}(p) d\gamma(p) = \beta \int -p d\mu(p) = C.
\end{aligned}
\end{equation}
\clearpage

\end{document}